\documentclass[10pt,a4paper]{article}
\usepackage{amsthm}
\usepackage{amsmath}
\usepackage{amsfonts}
\usepackage{amssymb}
\usepackage{graphicx}%
\usepackage[numbers]{natbib}
\setcounter{MaxMatrixCols}{30}
\newtheorem{theorem}{Theorem}

\newtheorem{example}[theorem]{Example}

\newtheorem{proposition}[theorem]{Proposition}
\newtheorem{remark}[theorem]{Remark}

\def\beas{\begin{eqnarray*}}
\def\eeas{\end{eqnarray*}}
\def\bea{\begin{eqnarray}}
\def\eea{\end{eqnarray}}
\def\be{\begin{equation}}
\def\ee{\end{equation}}
\def\bes{\begin{equation*}}
\def\ees{\end{equation*}}
\def\bi{\begin{itemize}}
\def\ei{\end{itemize}}

\renewenvironment{proof}[1][Proof]{\noindent\textbf{#1.} }{\ \rule{0.5em}{0.5em}}
\begin{document}
\title{Monte Carlo Greeks for financial products via approximative transition densities}
\author{J\"org Kampen$^1$, Anastasia Kolodko$^{1}$, and John Schoenmakers$^1$}
\maketitle \footnotetext[1]{Weierstrass Institute for Applied
Analysis and Stochastics, Mohrenstr. 39, D-10117 Berlin, Germany.
{\tt kampen[kolodko][schoenma]@wias-berlin.de}.
\\ Supported by the DFG Research Center \textsc{Matheon} `Mathematics for Key
Technologies' in Berlin.}

\begin{abstract}
In this paper we introduce efficient Monte Carlo estimators for the valuation of high-dimensional derivatives and their sensitivities (''Greeks''). These estimators are based on an analytical, usually approximative representation of the underlying density. We study approximative densities obtained by the WKB method. The results are applied in the context of a Libor market model.
\\[2mm] {\it Keywords:}   Financial derivatives,  sensitivities, Monte-Carlo methods, WKB expansions.
\\[0.6ex] {\it 2000 AMS subject
classification:} 60H10, 62G07, 65C05 
\end{abstract}

\section{Introduction}

Valuation methods for high-dimensional derivative products are typically based on Monte Carlo simulation of the underlying process. The dynamics of the underlyings are usually given via a (jump-)diffusion SDE. In case of a diffusion SDE, the underlying process may be simulated using an Euler scheme or a (weak) second order scheme {e.g. see \citet{KP} or \citet{MT0}. For simulation of jump-diffusions see e.g. \citet{Cont}, and \citet{GM} for simulation of (Libor) interest rate models with jumps.  

 The evaluation of option sensitivities, 'Greeks' in financial terms, comes down to the computation of expressions of the form $\frac{\partial }{\partial \lambda}E(f(X^{\lambda}))$ (and possibly higher order derivatives), where  $f$
is a pay-off function,   $X$ is the state of an underlying process depending on some parameter $\lambda.$ For example, the first and second order derivatives  with respect the initial state are called Deltas and Gammas, respectively. 
In the literature the evaluation of Greeks has been treated  by several methods (a nice overview about classical and recent literature is provided in \citet{EFT}). 
Classical finite difference approaches have been studied by \citet{EP},  \citet{BG}, \citet{MS}, \citet{MT}, \citet{DGR}, and \citet{GG06}. These approaches are quite general and easy to implement as they do not require particular knowledge 
of the distribution of the underlying. However, they require  full blown simulation of the corresponding system of stochastic differential equations and, in order to be efficient, some degree of regularity with respect to the pay-off function.
In case the transition kernel of $X$ is known or known in a good approximation, the latter drawback can be avoided by differentiating this kernel with respect to the sensitivity parameter $\lambda,$ see  \citet{FrKa} and \citet{FJ}.
The typical difficulty that the distribution of the underlying is only known for very special cases was overcome by
\citet{FLLLT}, who used the Malliavin integration-by-parts formula in order to express Greeks in the form
$E(f(X^{\lambda})\pi),$ where the random variable $\pi$ is called a {\em Greek weight.}  In a more recent alternative
approach 
\citet{EFT} construct Greek estimators which are based on variance minimizing choices of Greek weights. As a matter of fact,
the Malliavin method does not lead to this optimal weight in general.
In order to avoid straightforward SDE simulation in the context of the Libor market interest rate model, and so reducing simulation costs, \citet{KSS} considered lognormal approximations for the transition density,  whereas
\citet{HJJ}, and  \citet{PPR} propose specific drift approximations.

In an ideal situation, the density of the underlying process $X^\lambda$ at a fixed point in time is known explictly and an efficient method to sample from it is available.  Usually, however, neither of this is true. Even if the transition density is known, we will show that calculation of sensitivities, based on kernel differentiation for instance (as in \citep{FrKa}), may cause problems (high variance) in case the kernel under consideration is `highly peaked ', for example  due to small maturities, low volatilities, or high dimensionality of the underlying system. 
In this paper we therefore choose for a rather general approach with the following objectives.
\bi
\item
Developing  efficient variance bounded probabilistic representations  for price sensitivities, based on an
analytical approximation  of the underlying density  and a possibly rougher approximative standard density 
(e.g. a lognormal density)  which is basically used as an importance sampler. 
\item
 Construction  of  a ''good'' analytical approximation for the density of the underlying process by using (convergent) WKB\footnote{The historical origin of the name is the work of \textbf{W}entzel, \textbf{K}ramers, \textbf{B}rioullin in the context of semiclassical solutions of the Schr\"odinger equation. The meaning of WKB has broadened since; nowadays, it refers to analytic expansions of exponential form.} methods; 
\ei  
We underline that, in principle, the way of constructing an analytical approximation of the transition density is not essential for the developed Greek  estimators.  In this article  we exploit the use of WKB approximations as a generic convergent 
method.  In special cases, however, construction of high accuracy transition kernels may be possible by other means (see \citep{KSS} for example).

The structure of the paper is as follows.
In Section~2 we set up the model class for which we exemplify our methods and specify the financial products (including Bermudan callables) for  which prices and sensitivities  are to be determined. In Section~3 we introduce probabilistic representations for integral functionals of kernel type and their derivatives. As a particular result we prove that the corresponding estimator for the derivatives has non-exploding variance for sharply peaked kernels in contrast to some existing weighted Monte Carlo schemes. This estimator thus allows for 
efficient Monte Carlo estimation of option sensitivities, in particular with respect to underlyings (Deltas), even in situations where the densities are sharply peaked (for instance when volatilities are small). The general probabilistic representations introduced in Section~3 are applied to the computation of Deltas for Bermudan callable products in Section~4.
Section 5 deals with the WKB-theory of densities of diffusion equations (densities of processes which have continuous paths). 
%
%
In Section 5.1 we summarize some results concerning pointwise valid WKB-representations of densities obtained in \citet{Ka}. Since in practice only finitely many terms of a WKB expansion can be computed, it will be necessary to use a truncated form of the WKB-representation for actual computations. 
 In Section 5.2. we analyze the effect of this truncation error on approximations of solutions of Cauchy problems and their derivatives.
The case of non-autonomous diffusion models is discussed in Section 5.3. The results of Sections 2-5 are applied in Section~6 to the Libor market model. In Section 6.1. we compute explicitly the first three coefficients of the WKB representation of the Libor model density. In Section 6.2 we compute prices and Deltas in a case study of European swaptions.

\section{Basic setup}\label{Basic}

Let $X$ $=$ $(X^{1},...,X^{n})$ be a Markovian  process of financial derivative in $\mathbb{R}_{+}^{n}$ ($\mathbb{R}_{+}$ $:=$ $\{x:x>0\}$) 
under a given pricing
measure $P,$ connected with a given discounting numeraire $B,$ $B$ $>$ $0,$  on some filtered
probability space. For example, $X$ may represent a system of asset prices  or (Libor) interest
 rates.
A popular framework for  
the system $(X,B)$ is, for instance, the class of jump-diffusions (e.g. \citet{Cont}). For simplicity however,  we mainly 
consider in the present article ordinary diffusions,  
but, note that the main results generally extend to jump processes as well
(see \citet{KKS}).

With respect to an $n$-dimensional standard Wiener process $W$ $=$ $(W^{1},$ $...,$ $W^{n})^{\top}$ on the probability space $(\Omega,\mathcal{F},(\mathcal{F}_{t})_{t\in\lbrack t_{0},T]},P),$
where as usual $(\mathcal{F}_{t})$ is the $P$-augmentation of the filtration generated by $W,$
we assume that $X$ is governed by the stochastic differential equation (SDE),
\begin{equation} 
\ \frac{dX^{i}}{X^{i}}=\mu(t,X)dt+\sum_{j=1}^{n}\sigma^{ij}(t,X)dW^{j},\quad 1\le i,j\le n.\label{D1}%
\end{equation}
  It is assumed that $\mu(t,x)$ and the matrix
$\sigma(t,x)=\left(\sigma^{ij}(t,x)\right),$ $t\in\lbrack t_{0},T],$ $x\in
\mathbb{R}_{+}^{n}$ are such that for all
$x_{0}\in\mathbb{R}_{+}^{n},$ there exists a unique solution
$t\rightarrow X_{t}\in\mathbb{R}_{+}^{n}$ of
(\ref{D1}) for $t_{0}\leq t\leq T$ satisfying $X_{t_{0}}%
=x_{0}=:X_{t_{0}}^{t_{0},x_{0}}.$  
It is further assumed that the Markov process $X$ has a transition density 
\begin{equation}\label{Td}
p(t,x,s,y),\quad t_0\le s\le t\le T, \quad x,y\in \mathbb{R}_{+}^{n}, 
\end{equation} 
which is differentiable with respect to  $x,y,s,$ and $t,$  up to any order. In order to guarantee the existence and uniqueness of (\ref{D1}), and the existence of the transition density (\ref{Td}) as stated, it is sufficient to require  that the functions $\mu(\cdot,\cdot)$ and $\sigma(\cdot,\cdot)$ are bounded and have bounded derivatives up to any order, and that  the volatility matrix $\sigma(t,x)$
is regular with
\be\label{fuc}
0<\lambda_1\le\left|\left(\sigma\sigma^\top\right)(t,x)\right|\le\lambda_2
\ee
for all $(t,x),$ $t$ $\in$ $[t_0,T]$,  $x$ $\in$ $\mathbb{R}_{+}^{n}$, and some $0$ $<$ $\lambda_1$ $<$
$\lambda_2$ (see for example \citet{BT}). 

Let us take (w.l.o.g.) $B_{0}=1$ and consider a contingent claim with pay-off
function of the form $f(X_{\tau})B_{\tau}$ at some $(\mathcal{F}_{\cdot}%
)$-stopping time $\tau.$ By general arguments (e.g. \citet{Duf01}), the price of this claim at
time $t_{0}$ is given by%
\[
u(t_{0},x_{0})=E\ f(X_{\tau}^{t_{0},x_{0}}).
\]
For deterministic $\tau,$ say $\tau\equiv T,$ we have a European claim, and
for $t_{0}\leq t\leq T$ its discounted value process can be represented by%
\[
u_{t}:=u(t,X_{t}):=E^{{\mathcal F}_t} f(X_{T})=\int p(t,X_{t},T,y)f(y)dy,\quad\mbox{where}
\]
\begin{equation}
u(t,x)=\int p(t,x,T,y)f(y)dy \label{U}%
\end{equation}
is the unique solution of the Cauchy problem
\begin{align}
\frac{\partial u}{\partial t}+\frac{1}{2}\sum_{i,j=1}^{n}x^i x^j\left(\sigma\sigma^\top\right)^{ij}(t,x)\frac
{\partial^{2}u}{\partial x^{i}\partial x^{j}}+\sum_{i=1}^{n}x^{i}%
\mu(t,x)\frac{\partial u}{\partial x^{i}}  &  =0,\label{BK}\\
u(T,x)  &  =f(x). \nonumber
\end{align}
The density kernel $p(\cdot,\cdot,T,y)$ is the unique (weak) solution of
(\ref{BK}) with $p(T,x,T,y)$ $=$ $\delta(x-y),$ where $\delta$ is the
Dirac-delta function in Schwarz distribution sense.
\bigskip

Of particular importance are Bermudan callable contracts. A Bermudan contract starting at $t_{0},$ is specified by a set of exercise dates
$\{t_{1},t_{2},$ $...,t_{\mathcal{I}}\},$ where $t_{0}$ $<$ $t_{1}%
<...<t_{\mathcal{I}}$ $<T$, and corresponding (discounted) pay-off functions
$f_{i}(x),$ $1\leq i\leq$ $\mathcal{I}.$ According to the contract, the
holder has the right to call (once) a cash-flow $f_{i}(X_{t_{i}}^{t_{0},x_{0}%
})B_{t_{i}}^{t_{0},x_{0},1}$ (with $B_{0}^{t_{0},x_{0},1}=1$) at an exercise date $t_{i}$ of his choice. It is
well known (e.g. \citet{Duf01}) that the discounted price of this contract at time $t,$ $t_{0}\leq t\leq
T,$ assuming that no exercise took place before $t,$ is given by
\begin{equation}
u(t,x):=\sup_{\tau\in\mathcal{T}_{i,\mathcal{I}}}Ef_\tau(X_{\tau}%
^{t,x})=Ef_{\tau_{\ast}^{t,x}}(X_{\tau_{\ast}^{t,x}}^{t,x}),\quad t_{i-1}<t\leq t_{i}%
,\label{eqref}%
\end{equation}
where $x=X_{t}^{t_{0},x_{0}},$  $\mathcal{T}%
_{i,\mathcal{I}}$ the set of stopping times $\tau$ taking values in
$\{t_{i},t_{i+1},...,t_{\mathcal{I}}\},$ and $\tau_{\ast}^{t,x}$ is an optimal
stopping time. In particular the process $u(t,X_t)$ is a supermartingale and is called the Snell envelope
of the (discounted) cash-flow process $f_i(X_{t_i})$.  

\section{Probabilistic representations and their estimators
}\label{Estimators}
In this section we consider for a given smooth function $u:\mathbb{R}_{+}^{n}\rightarrow\mathbb{R}_{+}$
and a smooth kernel function $p:\mathbb{R}_{+}^{n}\times\mathbb{R}_{+}%
^{n}\rightarrow\mathbb{R}_{+},$  probabilistic representations for the integral
\begin{eqnarray}\notag
I(x)&:=&\int p(x,y)u(y)dy,\label{I(x)}\quad \mbox{and its gradient}\\
\frac{\partial I}{\partial x}(x)&=&\int \frac{\partial}{\partial x}p(x,y)u(y)dy, \quad   \label{pu}
\mbox{with}\quad
\frac{\partial}{\partial x}
:=
 \left(\frac{\partial}{\partial x_1},\ldots,\frac{\partial}{\partial x_n}\right).
\end{eqnarray} 
Here and in the following  sufficient (uniform) integrability conditions are assumed to be fulfilled, for instance, in order to guarantee that (\ref{pu}) is valid.
\begin{remark} 
{\rm
In (\ref{pu}), kernel $p$ (which may or may not be a density in the second argument) and function $u$ have to be distinguished from  the respective definitions in Section~\ref{Basic}, 
although they may be related. For fixed $t,T,$ $0\le t\le T,$ one could take (see (\ref{U}-\ref{BK})), $p(x,y)$ $:=$ $p(t,x,T,y)$ and $u(x)$ $:=$ $u(t,x)$ for example. 
}
\end{remark}
Let $\zeta$ be an $\mathbb{R}_{+}%
^{n}$-valued  random variable on some probability space with density $\phi,$  $\phi>0.$ Then, obviously,%
\begin{equation}
I(x)=E\ p(x,\zeta)\frac{u(\zeta)}{\phi(\zeta)}\label{Pr1}%
\end{equation}
is a probabilistic representation for (\ref{I(x)}) which may be estimated by
the unbiased Monte Carlo estimator%
\begin{equation}
\widehat{I}(x):=\frac{1}{M}\sum_{m=1}^{M}p(x,_{m}\zeta)\frac{u(_{m}\zeta
)}{\phi(_{m}\zeta)},\label{E1}%
\end{equation}
where for $m=1,...,M,$ $_{m}\zeta$ are i.i.d. samples from a distribution with
density $\phi.$ By taking gradients in (\ref{Pr1}) we readily obtain the
probabilistic representation%
\begin{equation}
\frac{\partial I}{\partial x}(x)=E\ \frac{\partial}{\partial x}p(x,\zeta
)\frac{u(\zeta)}{\phi(\zeta)},\label{Prdel}%
\end{equation}
with corresponding estimator,%
\begin{equation}
\widehat{\frac{\partial I}{\partial x}}(x):=\frac{1}{M}\sum_{m=1}^{M}%
\frac{\partial}{\partial x}p(x,_{m}\zeta)\frac{u(_{m}\zeta)}{\phi(_{m}\zeta
)}.\label{Naiv}%
\end{equation}
While as a rule (\ref{E1}) is an effective estimator for $I(x)$ for a proper choice of $\phi$, unfortunately the
gradient estimator (\ref{Naiv}) has a serious drawback: If the kernel
$p(x,\cdot)$ is sharply peaked (nearly proportional to a 'delta-function'), its variance may be extremely
high. This fact is demonstrated by the following stylistic example of a
multi-asset model, which is nevertheless realistic  in orders of magnitude.

\begin{example}\label{ExExpl}{\rm
Consider for fixed $x_{0}\in\mathbb{R}_{+}^{n},$ parameters $s>0,$ and
$\sigma>0,$ the $n$-dimensional lognormal density
\begin{equation}
p(s,\sigma;x_{0},y):=\frac{1}{\left(  2\pi\sigma^{2}s\right)  ^{n/2}}%
{\displaystyle\prod\limits_{i=1}^{n}}
\frac{\exp\left[  -\frac{1}{2\sigma^{2}s}\ln^{2}\frac{y^{i}}{x_{0}^{i}%
}\right]  }{y^{i}}.\label{LND}%
\end{equation}
In (\ref{LND}) $p(s,\sigma;x_{0},\cdot)$ is the density of the random
variable
$
(x_{0}^{1}e^{\sigma\sqrt{s}\xi^{1}},...,x_{0}^{n}e^{\sigma\sqrt{s}\xi^{n}}),
$ 
where $\xi^{i\text{ }},$ $i=1,...,d,$ are i.i.d. standard normal random
variables. Thus, for small $s$ and $\sigma$, $p(s,\sigma;x_{0},\cdot)$ is peaked
('delta-shaped') around $x_{0}.~$Let us now take $\phi(\cdot):=$%
\ $p(s,\sigma;x_{0},\cdot)$ in (\ref{Pr1}) and (\ref{Prdel}), respectively,
and $u\equiv||x_{0}||$ (a constant of order $x_{0}$ in magnitude). Clearly,
estimator (\ref{E1}) equals $||x_{0}||$ almost surely and so has zero
variance. However, estimator (\ref{Naiv}) is not deterministic and we have%
\begin{align*}
\widehat{\frac{\partial I}{\partial x^{j}}}(x_{0}) &  :=
\frac{1}{M}\sum_{m=1}^{M}\frac{||x_{0}||}{p(s,\sigma;x_{0},_{m}\zeta)}
\frac{\partial}{\partial x^{j}}p(s,\sigma;x_{0},_{m}\zeta)\\
&
=\frac{||x_{0}||}{M}\sum_{m=1}^{M}\frac{\partial}{\partial x^{j}}\ln p(s,\sigma;x_{0},_{m}\zeta)\\
&  
=\frac{||x_{0}||}{M}\sum_{m=1}^{M}\frac{\ln\frac{_{m}\zeta^{j}}{x_{0}^{j}}%
}{\sigma^{2}sx_{0}^{j}}=\frac{||x_{0}||}{M}\sum_{m=1}^{M}\frac{_{m}\xi^{1}%
}{\sigma\sqrt{s}x_{0}^{j}}.
\end{align*}
Hence, $E\left[  \widehat{\frac{\partial I}{\partial x^{j}}}(x_{0})\right]
=0$ as should be, but,
\begin{equation} \label{explo}
\text{Var}\left[  \widehat{\frac{\partial I}{\partial x^{j}}}(x_{0})\right]
=\frac{||x_{0}/x_{0}^{j}||^{2}}{M}\frac{1}{\sigma^{2}s}%
\end{equation} 
which explodes when $\sigma^{2}s$ goes to zero!
}
\end{example}

\begin{remark}{\rm
In \citet{FrKa} estimators (\ref{E1}) and (\ref{Naiv}) are used for computing prices and sensitivities of European Libor options, respectively. In their numerical examples they used $50\%$ (rather high) volatility in order to amplify Monte Carlo errors. While, indeed,  a larger volatility generally gives rise to a large Monte Carlo error of (\ref{E1}), Example~\ref{ExExpl} shows that the opposite is true for estimator (\ref{Naiv}).  For example, $50\%$ volatility  in combination with $0.5\,$yr. maturity corresponds to a (just moderate) variance factor $1/\left(\sigma^{2}s\right)$ $=$ $8.0$ in (\ref{explo}), while a more usual Libor volatility, e.g. $14\%$, and $0.5\,$y maturity would give a factor $102.0$(!). 
}
\end{remark}

In the present paper we propose sensitivity estimators which are efficient on a broad time and volatility scale. As a result, the next theorem provides a tool for constructing sensitivity (gradient) estimators with non-exploding variance.   

\begin{theorem}
\label{The} Let $\lambda$ be a reference density on $\mathbb{R}^{n}$ with
$\lambda(z)\neq0$ for all $z$ (for example, the standard normal density). Let
$\xi$ be an $\mathbb{R}^{n}$-valued random variable on some probability space, with density $\lambda$ and
$g:\mathbb{R}_{+}^{n}\times\mathbb{R}^{n}\rightarrow\mathbb{R}_{+}^{n}$ be a
smooth enough map which has at least continuous derivatives with $\left\vert
\partial g(x,z)/\partial z\right\vert \neq0,$ such that for each
$x\in\mathbb{R}_{+}^{n}$ the random variable $\zeta^{x}:=g(x,\xi)$ has a
density $\phi(x,\cdot)$ on $\mathbb{R}_{+}^{n}.$ Then, for (\ref{pu}) we have the
probabilistic representation 
\begin{equation}
\frac{\partial I}{\partial x}(x)=E\,\frac{\partial}{\partial x}\frac
{p(x,\zeta^{x})u(\zeta^{x})}{\phi(x,\zeta^{x})}=E\,\frac{\partial}{\partial
x}\frac{p(x,g(x,\xi))u(g(x,\xi))}{\phi(x,g(x,\xi))},\label{Istr}%
\end{equation}
with corresponding Monte Carlo estimator%
\begin{equation}
\widehat{\frac{\partial I}{\partial x}}(x)=\frac{1}{M}\sum_{m=1}^{M}%
\frac{\partial}{\partial x}\frac{p(x,g(x,_{m}\xi))u(g(x,_{m}\xi))}%
{\phi(x,g(x,_{m}\xi))}.\label{nonexpl}%
\end{equation}
Let $\left\Vert \cdot\right\Vert _{\alpha}:=\sqrt[\alpha]{E\left\vert
\cdot\right\vert ^{\alpha}}$ where $\left\vert \cdot\right\vert $ denotes
either a vector norm or a compatible matrix norm. Then it holds
\begin{equation}
E\,\left\vert \frac{\partial}{\partial x}\frac{p(x,g(x,\xi))u(g(x,\xi))}%
{\phi(x,g(x,\xi))}\right\vert ^{2}\leq2M_{2}^{2}M_{3}^{2}M_{4}^{2}+4M_{1}%
^{2}M_{4}^{2}M_{5}^{2}+4M_{1}^{2}M_{3}^{2}M_{4}^{2}M_{6}^{2},\label{Varest}%
\end{equation}
hence the second moments of the Monte Carlo samplers for the components of
$\partial I/\partial x$ are bounded by the right-hand-side of (\ref{Varest}),
if for fixed $x\in\mathbb{R}_{+}^{n},$ there are constants $\alpha
_{1},...,\alpha_{6}>1$ and $M_{1},...,M_{6}>$ $0$ with
\[
\frac{1}{\alpha_{4}}+\frac{1}{\alpha_{1}}+\frac{1}{\alpha_{5}}=1,\quad\frac
{1}{\alpha_{4}}+\frac{1}{\alpha_{2}}+\frac{1}{\alpha_{3}}=1,\quad\frac
{1}{\alpha_{4}}+\frac{1}{\alpha_{1}}+\frac{1}{\alpha_{6}}+\frac{1}{\alpha_{3}%
}=1,
\]
such that,%
\begin{align}
\left\Vert u(g(x,\xi))\right\Vert _{2\alpha_{1}}  & \leq M_{1},\quad\left\Vert
\frac{\partial u}{\partial y}(g(x,\xi))\right\Vert _{2\alpha_{2}}\leq M_{2},\notag\\
\left\Vert \frac{\partial g}{\partial x}(x,\xi)\right\Vert _{2\alpha_{3}}  &
\leq M_{3},\quad\left\Vert \frac{p(x,g(x,\xi))}{\phi(x,g(x,\xi))}\right\Vert
_{2\alpha_{4}}\leq M_{4}, \label{Ms}
\end{align}%
\[
\left\Vert \left(  \frac{1}{p}\frac{\partial p}{\partial x}-\frac{1}{\phi
}\frac{\partial\phi}{\partial x}\right)  (x,g(x,\xi))\right\Vert _{2\alpha
_{5}}\!\!\leq M_{5},\ \left\Vert \left(  \frac{1}{p}\frac{\partial p}{\partial
y}-\frac{1}{\phi}\frac{\partial\phi}{\partial y}\right)  (x,g(x,\xi
))\right\Vert _{2\alpha_{6}}\!\!\leq M_{6}.
\]

\end{theorem}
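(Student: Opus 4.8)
The plan is to establish the probabilistic representation \eqref{Istr} first, and then derive the second-moment bound \eqref{Varest} by an explicit product-rule expansion followed by a multi-H\"older estimate. For the representation, I would start from the elementary identity \eqref{Pr1}, but now with the density $\phi$ of $\zeta^x = g(x,\xi)$ depending on $x$; the key observation is that $I(x) = E\,p(x,\zeta^x)u(\zeta^x)/\phi(x,\zeta^x)$ where the expectation is with respect to the \emph{fixed} law $\lambda$ of $\xi$, since $E\,h(\zeta^x) = \int h(g(x,z))\lambda(z)\,dz$ and the change of variables $y = g(x,z)$, legitimate because $|\partial g/\partial z|\neq 0$, converts this to $\int h(y)\,\phi(x,y)\,dy$, giving $\int p(x,y)u(y)\,dy = I(x)$. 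Because the underlying probability space for $\xi$ does not depend on $x$, differentiation in $x$ passes under the $\lambda$-expectation (this is where the assumed uniform integrability conditions are invoked), yielding \eqref{Istr}; the estimator \eqref{nonexpl} is then the obvious i.i.d.\ empirical average.

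For the variance bound, I would apply the product/quotient rule to the integrand of \eqref{Istr}. Writing $F(x,\xi) := p(x,g(x,\xi))u(g(x,\xi))/\phi(x,g(x,\xi))$, a direct computation gives
\begin{equation*}
\frac{\partial F}{\partial x} = \frac{p\,u}{\phi}\left[\frac{\partial}{\partial x}\ln u + \left(\frac{1}{p}\frac{\partial p}{\partial x} - \frac{1}{\phi}\frac{\partial \phi}{\partial x}\right) + \left(\frac{1}{p}\frac{\partial p}{\partial y} - \frac{1}{\phi}\frac{\partial \phi}{\partial y}\right)\frac{\partial g}{\partial x} + \frac{1}{u}\frac{\partial u}{\partial y}\frac{\partial g}{\partial x}\right],
\end{equation*}
evaluated along $y = g(x,\xi)$, where the total $x$-derivative of $u(g(x,\xi))$ splits into the explicit term (absent here, since $u$ has no explicit $x$-dependence) and the chain-rule term $(\partial u/\partial y)(\partial g/\partial x)$. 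The point of isolating the combinations $(1/p)(\partial p/\partial x) - (1/\phi)(\partial\phi/\partial x)$ and similarly in $y$ is that these are the \emph{differences of logarithmic derivatives}: when $\phi$ is a good approximation of $p$ (the importance-sampling setup), these differences stay bounded even as both $p$ and $\phi$ become sharply peaked, which is exactly what rescues the variance compared with estimator \eqref{Naiv}. Grouping the $\partial g/\partial x$ terms, one is left with three summands, and $|\partial F/\partial x|^2 \le$ a sum with the combinatorial factors $2$ and $4$ shown in \eqref{Varest} (from $|a+b|^2 \le 2|a|^2 + 2|b|^2$ applied twice, once to split off the $M_2 M_3$ term and once more inside).

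Finally, I would estimate $E|\partial F/\partial x|^2$ termwise by a three-factor H\"older inequality: on each of the three summands the factors are $p/\phi$, then $u$, and then either $\partial u/\partial y$ together with... no — more precisely, term one has factors $p/\phi$, $\partial u/\partial y$, $\partial g/\partial x$ with exponents $2\alpha_4, 2\alpha_2, 2\alpha_3$ (using $1/\alpha_4 + 1/\alpha_2 + 1/\alpha_3 = 1$), term two has factors $p/\phi$, $u$, and the $x$-log-derivative difference with exponents $2\alpha_4, 2\alpha_1, 2\alpha_5$, and term three has the four factors $p/\phi$, $u$, the $y$-log-derivative difference, and $\partial g/\partial x$ with exponents $2\alpha_4, 2\alpha_1, 2\alpha_6, 2\alpha_3$ (using the third constraint). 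Squaring turns each $\|\cdot\|_{2\alpha_i}$ bound $M_i$ into $M_i^2$, reproducing \eqref{Varest} exactly. The main obstacle is bookkeeping rather than depth: one must verify that the three H\"older partitions are consistent with the stated exponent constraints and that every factor appearing after the product-rule expansion is covered by one of the six norm bounds in \eqref{Ms}; the genuinely substantive content — that the log-derivative differences are the right quantities to bound — is already built into the hypotheses.
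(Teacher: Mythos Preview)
Your proposal is correct and follows essentially the same route as the paper: the representation \eqref{Istr} via the change-of-variables identity $\phi(x,g(x,z))\,|\partial g/\partial z|=\lambda(z)$ and differentiation under the $\lambda$-integral, then the product-rule expansion into the same three terms, the same two applications of $|a+b|^2\le 2|a|^2+2|b|^2$ (the paper first splits off $u\,\partial_x(p/\phi)$ from $(p/\phi)\,\partial_x u$, then splits the former into its $x$- and $y$-log-derivative parts), and the identical three- and four-factor H\"older estimates with the stated exponent constraints.
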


\begin{proof}
For any bounded measurable $\psi:\mathbb{R}_{+}^{n}\rightarrow\mathbb{R},$ we
have%
\[
\int\psi(g(x,z))\lambda(z)dz=\int\psi(g(x,z))\phi(x,g(x,z))\left\vert
\frac{\partial g(x,z)}{\partial z}\right\vert dz.
\]
Therefore, the densities $\phi$ and $g$ are connected via the relationship%
\begin{equation}
\phi(x,g(x,z))\left\vert \frac{\partial g(x,z)}{\partial z}\right\vert
=\lambda(z). \label{meas}%
\end{equation}
By (\ref{meas}), the right-hand-side of
(\ref{Istr}) equals 
\begin{align*}
\frac{\partial}{\partial x}E\,\frac{p(x,g(x,\xi))u(g(x,\xi))}{\phi
(x,g(x,\xi))}  &  =\frac{\partial}{\partial x}\int\frac{p(x,g(x,z))u(g(x,z))}%
{\phi(x,g(x,z))}\lambda(z)dz\\
&  =\frac{\partial}{\partial x}\int p(x,g(x,z))u(g(x,z))\left\vert
\frac{\partial g(x,z)}{\partial z}\right\vert dz\\
&  =\frac{\partial}{\partial x}\int p(x,y)u(y)dy=\frac{\partial I}{\partial
x}(x).
\end{align*}
To prove the moment estimation (\ref{Varest}), we observe that%
\begin{align*}
&  E\,\left\vert \frac{\partial}{\partial x}\frac{p(x,\zeta^{x})u(\zeta^{x}%
)}{\phi(x,\zeta^{x})}\right\vert ^{2}=E\,\left\vert \frac{\partial}{\partial
x}\frac{p(x,g(x,\xi))u(g(x,\xi))}{\phi(x,g(x,\xi))}\right\vert ^{2}\\
&  =E\,\left\vert u(g(x,\xi))\frac{\partial}{\partial x}\frac{p(x,g(x,\xi
))}{\phi(x,g(x,\xi))}+\frac{p(x,g(x,\xi))}{\phi(x,g(x,\xi))}\frac{\partial
u}{\partial y}(g(x,\xi))\frac{\partial g}{\partial x}(x,\xi)\right\vert ^{2}\\
&  \leq2E\,\frac{p^{2}(x,g(x,\xi))}{\phi^{2}(x,g(x,\xi))}\left\vert
\frac{\partial u}{\partial y}(g(x,\xi))\right\vert ^{2}\left\vert
\frac{\partial g}{\partial x}(x,\xi)\right\vert ^{2}\\
&  \quad+2E\,u^{2}(g(x,\xi))\left\vert \frac{\partial}{\partial x}%
\frac{p(x,g(x,\xi))}{\phi(x,g(x,\xi))}\right\vert ^{2}
=:2(I)+2(II).
\end{align*}
Then by H\"{o}lders inequality, $(I)$ $\leq$
\[
\sqrt[\alpha_{2}]{E\,\left\vert \frac{\partial u}{\partial y}(g(x,\xi
))\right\vert ^{2\alpha_{2}}}\sqrt[\alpha_{3}]{E\,\left\vert \frac{\partial
g}{\partial x}(x,\xi)\right\vert ^{2\alpha_{3}}}\sqrt[\alpha_{4}]%
{E\frac{p^{2\alpha_{4}}(x,g(x,\xi))}{\phi^{2\alpha_{4}}(x,g(x,\xi))}}\leq M_{2}^{2}M_{3}^{2}M_{4}^{2}.
\]
For the second term we have $(II)$ $=$
\begin{align*}
&  E\,u^{2}(g(x,\xi))\,\frac{p^{2}(x,g(x,\xi))}{\phi^{2}(x,g(x,\xi
))}\left\vert \frac{\frac{\partial p}{\partial x}(x,g(x,\xi))}{p(x,g(x,\xi
))}-\frac{\frac{\partial\phi}{\partial x}(x,g(x,\xi))}{\phi(x,g(x,\xi
))}\right. \\
&  \left.  +\left(  \frac{\frac{\partial p}{\partial y}(x,g(x,\xi
))}{p(x,g(x,\xi))}-\frac{\frac{\partial\phi}{\partial y}(x,g(x,\xi))}%
{\phi(x,g(x,\xi))}\right)  \frac{\partial g}{\partial x}(x,\xi)\right\vert
^{2}\\
&  \leq2E\,u^{2}(g(x,\xi))\,\frac{p^{2}(x,g(x,\xi))}{\phi^{2}(x,g(x,\xi
))}\left\vert \frac{\frac{\partial p}{\partial x}(x,g(x,\xi))}{p(x,g(x,\xi
))}-\frac{\frac{\partial\phi}{\partial x}(x,g(x,\xi))}{\phi(x,g(x,\xi
))}\right\vert ^{2}\\
&  +2E\,u^{2}(g(x,\xi))\,\frac{p^{2}(x,g(x,\xi))}{\phi^{2}(x,g(x,\xi
))}\left\vert \frac{\frac{\partial p}{\partial y}(x,g(x,\xi))}{p(x,g(x,\xi
))}-\frac{\frac{\partial\phi}{\partial y}(x,g(x,\xi))}{\phi(x,g(x,\xi
))}\right\vert ^{2}\left\vert \frac{\partial g}{\partial x}(x,\xi)\right\vert
^{2}\\
&  \leq2M_{1}^{2}M_{4}^{2}M_{5}^{2}+2M_{1}^{2}M_{3}^{2}M_{4}^{2}M_{6}^{2},
\end{align*}
again by H\"{o}lders inequality.
\end{proof}

\begin{remark}{\rm
If in (\ref{Ms}) the random variables $u(g(x,\xi)),$ $\frac{\partial u}{\partial y}(g(x,\xi)),$ and so on, have moments of high enough order, Theorem~\ref{The} guarantees that the variance of estimator (\ref{nonexpl}) is controlled via the moment estimates (\ref{Ms}).
The most delicate bound in (\ref{Ms})  is $M_5$ in fact. Indeed, if one takes $g(x,\xi)\equiv g(x_0,\xi)$ estimator 
(\ref{nonexpl}) collapses to (\ref{Naiv}), and   in Example~\ref{ExExpl}, page \pageref{ExExpl}, where $\phi(x,y)\equiv p(x_0,y)$ in fact, we see that $M_5$ cannot be taken small when $\sigma^2s$ is small, i.e. when $p$ is highly peaked around $x$. In contrast, if for fixed $x,$  $\phi(x_,\cdot)$ is approximately proportional to $p(x,\cdot)$  and $\partial\ln \phi(x,\cdot)/\partial x$ $\approx$  $\partial\ln p(x,\cdot)/\partial x$ (both with respect to the weight function $\phi(x,\cdot)$), a small $M_5$ may exist. Note that for  $\phi(\cdot,\cdot)$ exactly proportional to $p(\cdot,\cdot),$ we may take $M_5=0.$ 
}
\end{remark}

\begin{remark}{\rm
It can be shown that  Theorem~\ref{The} can be extended to probabilistic representations and corresponding estimators for higher order derivatives,
\begin{equation*}
\frac{\partial I}{\partial x^\beta}(x)=E\,\frac{\partial}{\partial x^\beta}\frac
{p(x,\zeta^{x})u(\zeta^{x})}{\phi(x,\zeta^{x})}=E\,\frac{\partial}{\partial
x^\beta}\frac{p(x,g(x,\xi))u(g(x,\xi))}{\phi(x,g(x,\xi))},
\end{equation*}
with corresponding Monte Carlo estimator%
\begin{equation}
\widehat{\frac{\partial I}{\partial x^\beta}}(x)=\frac{1}{M}\sum_{m=1}^{M}%
\frac{\partial}{\partial x^\beta}\frac{p(x,g(x,_{m}\xi))u(g(x,_{m}\xi))}%
{\phi(x,g(x,_{m}\xi))},\label{nonexpl_k}
\end{equation}  
where $\beta:=(\beta_1,\ldots,\beta_n)$, $\beta_i$  $\in$ $\{0,1,2,\ldots\}$ is a multi-index with (formally) $\partial x^\beta$ $=$ $\partial x_1^{\beta_1}\partial x_2^{\beta_2}\cdots\partial x_n^{\beta_n}.$
Loosely speaking, the variance of the higher order derivative estimator (\ref{nonexpl_k}) can 
be bounded from above by an expression like (\ref{Varest}) involving (i)
sufficiently high moments of the derivatives, 
$
y\rightarrow\frac{\partial u}{\partial y^\gamma},\quad \mbox{and}\quad z\rightarrow\frac{\partial g(x,z)}{\partial z^\gamma}
$ 
for fixed $x$, $\gamma$
$\le$ $\beta$ (component wise), with respect to weight functions $y\rightarrow\phi(x,y)$ and $z\rightarrow\lambda(z),$ respectively,
and, (ii) for fixed $x,$ $L^q(\mathbb{R}^n_+,\phi(x,y)dy)$-norms  of
\bes
y\rightarrow\frac{\partial }{\partial x^\gamma}\left(\frac{\phi(x,y)}{p(x,y)}\right),\quad \mbox{and}\quad
y\rightarrow\frac{\partial }{\partial y^\gamma}\left(\frac{\phi(x,y)}{p(x,y)}\right),\quad \gamma\le\beta,
\ees
for $q$ large enough.
}  
\end{remark}

\begin{remark}{\rm
In the next section we consider financial applications where $I(x)$ is the price of a derivative contract considered in dependence of the argument $x$ which may stand for the underlying process or some parameter (vector) which affects the dynamics of the underlying process (e.g. volatilities). Moreover, we there give a recipe how to construct a lognormal density approximation $\phi,$ corresponding to a particular normal reference density  and an exponential type  transformation  (in Theorem~\ref{The},  $\lambda$ and $g$ respectively). 
}
\end{remark}

\begin{remark}{\rm
Theorem~4 can be can be generalized to the case where both $p$ and $\phi$ live on a common submanifold rather than on the whole state space. Via such a generalization it would be possible to extend our results to factor reduced situations in the spirit 
of \citet{FrKa} and \citet{FJ}. However, this is considered beyond the scope of the present article and therefore we restrict ourselves to the full-factor case.  
}
\end{remark}

\section{Sensitivities for Bermudan options}\label{Bermudan}

Theorem~\ref{The} may  be applied in general for computing sensitivities (''Greeks'') of derivative products.
For estimator (\ref{Naiv}) the danger of exploding variance is typically the largest when derivatives of prices with respect to underlyings (Deltas, Gammas) are considered. We therefore consider  in this section
only (first order) derivatives with respect to the underlying process, hence Deltas.    

Let $\tau:\Omega\rightarrow\mathbb{R}_{+}$ be a given stopping time with
respect to the filtration $(\mathcal{F}_{\cdot}).$ As usual we may think of $\Omega$ as being the space of 
functions $\omega$ $:$ $[0,\infty)$ $\rightarrow$ $\mathbb{R}^{n},$ which are continuous from the right and have limits  from the left, and 
define $\tau
^{s,x}(\omega)$ $:=$ $s+\tau(X_{s+(\cdot)}^{s,x}(\omega)).$
We  now consider the Bermudan contract introduced in Section~\ref{Basic}.
For fixed $t,t^{+},$ $t_{0}\leq t\leq t^{+}\leq t_{1},$ $x\in\mathbb{R}%
_{+}^{n},$ we have $\tau_{\ast}^{t,x}=\tau_{\ast}^{t^{+},X_{t^{+}}^{t,x}}$
since $\tau_{\ast}^{t,x}\geq t_{1},$ and we thus may write%
\begin{align*}
u(t,x) &  :=Ef(X_{\tau_{\ast}^{t,x}}^{t,x})=EE^{\mathcal{F}_{t^{+}}}%
f(X_{\tau_{\ast}^{t^{+},X_{t^{+}}^{t,x}}}^{t^{+},X_{t^{+}}^{t,x}})\\
&  =\int p(t,x,t^{+},y)dyEf(X_{\tau_{\ast}^{t^{+},z}}^{t^{+},y})
=\int p(t,x,t^{+},y)u(t^{+},y)dy,
\end{align*}
by the Chapman-Kolmogorov equation. 

For each $t,t^+$ as above, let $\phi(t,x,t^{+},y),$ $g(t,x,t^{+},y)$, and reference density $\lambda(t,t^{+},z)$ be as in Theorem~\ref{The}. We then have the probabilistic representation
\begin{equation}
u(t,x)=E\ \frac{p(t,x,t^{+},g(t,x,t^{+},\xi))}{\phi(t,x,t^{+},g(t,x,t^{+}%
,\xi))}f(X_{\tau_{\ast}^{t^{+},g(t,x,t^{+},\xi)}}^{t^{+},g(t,x,t^{+},\xi)}),\label{pr}
\end{equation}
with Monte Carlo estimator
\begin{equation}
\widehat u(t,x):=\frac{1}{M}\sum_{m=1}^{M}  \frac{p(t,x,t^{+},g(t,x,t^{+},_{m}\xi))}{\phi(t,x,t^{+}%
,g(t,x,t^{+},_{m}\xi))}f(X_{\tau_{\ast}^{t^{+},g(t,x,t^{+},\ _{m}\xi)}}%
^{t^{+},g(t,x,t^{+},\ _{m}\xi)}),\label{E11}
\end{equation}
and for the gradients (Deltas) we have the probabilistic representation%
\begin{equation}
\Delta_{i}:=\frac{\partial u}{\partial x^{i}}(t,x)=E\ \frac{\partial}{\partial
x^{i}}\left(  \frac{p(t,x,t^{+},g(t,x,t^{+},\xi))}{\phi(t,x,t^{+}%
,g(t,x,t^{+},\xi))}f(X_{\tau_{\ast}^{t^{+},g(t,x,t^{+},\xi)}}^{t^{+}%
,g(t,x,t^{+},\xi)})\right)  \label{del}%
\end{equation}
with Monte Carlo estimator
\begin{equation}
\widehat{\Delta}_{i}:=\frac{1}{M}\sum_{m=1}^{M}\frac{\partial}{\partial x^{i}%
}\left(  \frac{p(t,x,t^{+},g(t,x,t^{+},_{m}\xi))}{\phi(t,x,t^{+}%
,g(t,x,t^{+},_{m}\xi))}f(X_{\tau_{\ast}^{t^{+},g(t,x,t^{+},\ _{m}\xi)}}%
^{t^{+},g(t,x,t^{+},\ _{m}\xi)})\right)  ,\label{delprop1}%
\end{equation}
where $_{m}\xi,$ $m=1,...,M,$ are i.i.d. samples from the reference density
$\lambda.$ Indeed, by pre-conditioning on $\mathcal{F}_{t^{+}}$ and then taking expectations we see that   
(\ref{E11}) and (\ref{delprop1}) are unbiased Monte Carlo estimators for the price (\ref{pr}) and 'deltas' (\ref{del}),
respectively. 
Moreover, if $\phi$ is close to $p$ in the sense of Theorem~\ref{The}, it is not difficult to see that also gradient estimator (\ref{delprop1}) has 
non-exploding variance when $t^+\downarrow t.$

Estimators (\ref{E11})  and (\ref{delprop1}) are  useful if one has an analytic approximation
$\widehat{p}(t,x,t^+,y)$ of the density $p(t,x,t^+,y)$  and known  densities $\phi(x,\cdot)$ for $x\in\mathbb{R}^n_+$. 
The approximation $\widehat{p}$ may be
obtained by
some specific method, for example by a WKB expansion as presented in Section~\ref{WKBsec}, or
some lognormal approximation as proposed in \citet{KSS} for the Libor market model.
Of course the density
$\phi$ has to be chosen with some care. If it is possible to sample directly from
$\widehat{p}$ (e.g. in case of a log-normal approximation) we may take $\phi$ $=$
$\widehat{p}.$ If not, (e.g. in the case of a WKB expansion) one may take for
$\phi$ a (not necessarily very accurate) lognormal approximation of the
density $p.$ 

A canonical lognormal approximation for $p(t,x,t^{+},z)$ is obtained by
freezing $X$ in the coefficients of (\ref{D1}) at the initial time. We thus
obtain
\begin{align}\label{zeta}%
X_{t^{+}}^{^{\mbox{\tiny lgn}}t,x;i}  & :=x^{i}\exp\left(  -\frac{1}{2}\sum_{j=1}^{n}%
\int_{t}^{t^{+}}(\sigma^{ij})^{2}(s,x)ds+\int_{t}^{t^{+}}r(s,x)ds\right.
\nonumber\\
& \left.  +\sum_{j=1}^{n}\int_{t}^{t^{+}}\sigma^{ij}(s,x)dW_{s}^{j}\right)
=:x_i\exp(\xi_i).
\end{align}
Here, $(\xi_i)_{i=1}^n$ is a Gaussian random vector with
$$
E\xi_i=-\frac{1}{2}\sum_{j=1}^{n}%
\int_{t}^{t^{+}}(\sigma^{ij})^{2}(s,x)ds+\int_{t}^{t^{+}}r(s,x)ds=:\mu^{i;t,t^+,x},\quad 1\leq i\leq n,
$$
and
$$
Cov(\xi_i,\xi_j)=\sum_{l=1}^n\int_t^{t^+}\sigma^{il}(s,x)\sigma^{jl}(s,x)\,ds=:\sigma^{ij;t,t^+,x},
\quad 1\leq i,j\leq n.
$$
Clearly, the density $\phi$ is then
given by
\begin{eqnarray}\label{LNf}
\phi(t,x,t^+,y):=\frac{\psi_{\mu^{t,t^{+},x},\sigma^{t,t^{+},x}}
(\ln \frac{y^{1}}{x^{1}},\ln \frac{y^{2}}{x^{2}},...,\ln \frac{y^{n}}{x^{n}%
})}{y^{1}y^{2}\cdot\cdot\cdot\,y^{n}}, 
\end{eqnarray}
$y^{i}>0,$ $1\leq i\leq n,$
with $\psi_{\mu^{t,t^{+},x},\sigma^{t,t^{+},x}}$ being the
density of the $n$-dimensional normal distribution $\mathcal{N}_{n}%
(\mu^{t,t^{+},x},\sigma^{t,t^{+},x})$ with $\mu^{t,t^{+}%
,x}:=(\mu^{i;t,t^{+},x})_{1\leq i\leq n}$ and 
$\sigma^{t,t^+,x}:=(\sigma^{ij;t,t^+,x})_{1\leq i,j\leq n}.$

For practical applications it is useful to discretize estimator (\ref{delprop1}) to  
\begin{eqnarray}
\widehat{\Delta}^h_{i}:=\frac{1}{M}\sum_{m=1}^{M}\frac{1}{2h%
}\left(  \frac{p(t,x+\mathfrak{h}_i,t^{+},g(t,x+\mathfrak{h}_i,t^{+},_{m}\xi))}{\phi(t,x+\mathfrak{h}_i,t^{+}%
,g(t,x+\mathfrak{h}_i,t^{+},_{m}\xi))}f(X_{\tau_{\ast}^{t^{+},g(t,x+\mathfrak{h}_i,t^{+},\ _{m}\xi)}}%
^{t^{+},g(t,x+\mathfrak{h}_i,t^{+},\ _{m}\xi)})\right.\notag\\
-\left.  \frac{p(t,x-\mathfrak{h}_i,t^{+},g(t,x-\mathfrak{h}_i,t^{+},_{m}\xi))}{\phi(t,x-\mathfrak{h}_i,t^{+}%
,g(t,x-\mathfrak{h}_i,t^{+},_{m}\xi))}f(X_{\tau_{\ast}^{t^{+},g(t,x-\mathfrak{h}_i,t^{+},\ _{m}\xi)}}%
^{t^{+},g(t,x-\mathfrak{h}_i,t^{+},\ _{m}\xi)})
\right),\label{delprop2}%
\end{eqnarray}
where $\mathfrak{h}_i:=h(\delta_{i1},\ldots,\delta_{in})$ ($\delta_{ij}$ being the Kronecker symbol), for small enough $h>0.$ 
Without further details we note that according to Milstein and Tretyakov (2004) in a related context, it is efficient to take $h\approx x/\sqrt{M}.$  

As an alternative, it is also possible to expand the derivatives in (\ref{delprop1}), which leads to a SDE system of first order variation as in \cite{MS} and \cite{GG06}. In the differentiation of (\ref{delprop1}) with respect to $x$ for a fixed trajectory,  $\tau_{\ast}^{t^{+},x}(\omega)$ can be considered to be independent  of $x$.
This can be seen as follows: if $\tau_{\ast}^{t^{+},x}(\omega)$ $=$ $p,$ the random variable $X^{t^{+},x}_p,$ which is assumed to have a  density in $\mathbb{R}^n,$  lays almost surely in the interior of the exercise region. Due to the fact that (almost surely)
the map $x$ $\rightarrow$ $X^{t^{+},x}_{p}$ is smooth (e.g. see Protter (1990)), $X^{t^{+},y}_{p}$ lays in the exercise region for $y$
in an open disc around $x.$ As a consequence, for any $y$ in this disc we have  $\tau_{\ast}^{t^{+},y}(\omega)$ $=$ 
$\tau_{\ast}^{t^{+},x}(\omega)$ $=$ $p.$
Thus, by
differentiating (\ref{delprop1}) path-wise we obtain
\begin{align}
\widehat{\Delta}_{i}  & :=\frac{1}{M}\sum_{m=1}^{M}f(X_{\tau_{\ast}%
^{t^{+},g(t,x,t^{+},\ _{m}\xi)}}^{t^{+},g(t,x,t^{+},\ _{m}\xi)})\frac
{\partial}{\partial x^{i}}\left(  \frac{p(t,x,t^{+},g(t,x,t^{+},_{m}\xi
))}{\phi(t,x,t^{+},g(t,x,t^{+},_{m}\xi))}\right)  \label{delprop3}\\
& +\frac{1}{M}\sum_{m=1}^{M}\frac{p(t,x,t^{+},g(t,x,t^{+},_{m}\xi))}%
{\phi(t,x,t^{+},g(t,x,t^{+},_{m}\xi))}\frac{\partial f}{\partial z}%
(X_{\tau_{\ast}^{t^{+},g(t,x,t^{+},\ _{m}\xi)}}^{t^{+},g(t,x,t^{+},\ _{m}\xi)})\cdot\nonumber\\
& \cdot\partial_{y}X_{\tau_{\ast}^{t^{+},g(t,x,t^{+},\ _{m}\xi)}}^{t^{+}%
,y}(g(t,x,t^{+},\ _{m}\xi))\frac{\partial g(t,x,t^{+},\ _{m}\xi)}{\partial
x^{i}},\nonumber
\end{align}
where
$
\frac{\partial}{\partial x^{i}}\frac{p(t,x,t^{+},y)}{\phi(t,x,t^{+},y)}%
,
\frac{\partial g(t,x,t^{+},\ _{m}\xi)}{\partial x^{i}},
\frac{\partial f}{\partial z}%
$ 
can in principle be expressed analytically, and the vector process
$
\partial_{y}X_{s}^{t^{+},y}(\cdot):=\frac{\partial X_{s}^{t^{+},y}}{\partial
y}(\cdot),$ 
$s\geq t^{+},
$ 
can in principle be simulated via a variational system of SDEs (e.g. see
\citet{Pr90}, \citet{MS}, \citet{GG06}). 

In this paper we will prefer the discretized version (\ref{delprop2})  of
(\ref{delprop1}) for our applications.
The algorithm is as follows. We first choose an $h>0$, and sample $_{m}\xi$ for $m=1,\ldots,M$ from the reference (usually normal) density.
Next we simulate for each $m$ a pair of trajectories $_{m}X^\pm,$ which start in
$_mg^\pm$ $:=$ $g(t,x\pm h,t^+,\, _{m}\xi)$ at$\ t^{+},$ and end at the optimal stopping times $_m\tau_\ast^\pm$ $:=$ 
$\tau_{\ast}^{t^{+}, _mg^\pm}.$ 
   Of course the optimal exercise dates $_m\tau_\ast^\pm$  are generally
unknown in practice, but we assume that we have good approximations
$_m\tau^\pm$ at hand, which are constructed via some well known
procedure. For example, in a pre-computation we may construct an exercise
boundary via a regression method (e.g. \citet{LS}), or as
an alternative, we may use the policy iteration method of 
\citet{KS1}, see also \citet{BeSc}. 
As discussed above,  for a particular $\omega$  we have $_m\tau_\ast^+$ $=$ $_m\tau_\ast^-$ provided that $h$ is small enough. For this reason we  take in our simulations simply $_m\tau^-= _m\tau^+,$ where $\tau$ is some approximation of the optimal exercise policy. This pragmatic assumption is justified if the probability 
of the event $_m\tau^-\neq _m\tau^+$ is small enough, i.e. $h$ is small enough.  
For each $m$ we compute also the values $_mp^\pm$ $:=$ $p(t,x\pm \mathfrak{h},t^{+},\,_{m}g^{\pm})$
and $_m\phi^\pm$ $:=$ $\phi(t,x\pm \mathfrak{h},t^{+},\,_{m}g^{\pm})$, and  finally compute the estimate (\ref{delprop2}).

\begin{remark}{\rm
In the previous sections  vector and matrix components are denoted by superscripts,  so  that time parameters of processes  can be denoted by subscripts. In the next sections we depart from this convention and use subscripts for vector and matrix components.  }  
\end{remark}

\section{WKB approximations for transition densities}\label{WKBsec}

\subsection{Recap of WKB theory}
We summarize some results concerning WKB-expansions of parabolic equations (cf. \citet{Ka} for details). 
Let us consider the parabolic  diffusion operator
\begin{equation}
\begin{array}{l}
\frac{\partial u}{\partial  t}+Lu\equiv \frac{\partial u}{\partial t}+\frac{1}{2}\sum_{i,j}a_{ij}\frac{\partial^2 u}{\partial x_i\partial x_j}+
\sum_i b_i\frac{\partial u}{\partial x_i}. \label{PPDE}
\end{array}
\end{equation}
  For simplicity of notation and without loss of generality it is assumed that the diffusion coefficients $a_{ij}
$ and the first order coefficients $b_i$ in (\ref{PPDE})
depend on the spatial variable $x$ only. In the following let $\delta t:=T-t$, and let the functions
\begin{equation*}
(x,y)\rightarrow d(x,y)\ge0,~~(x,y)\rightarrow c_k(x,y),~k\geq 0,
\end{equation*}
be defined on  ${\mathbb R}^n\times {\mathbb R}^n$, with $d^2$ and $c_k,$ $k\ge0$, being smooth. 
Then a set of (simplified) conditions sufficient for pointwise valid WKB-representations 
of the form
\begin{equation}\label{WKBrep}
p(t,x,T,y)=\frac{1}{\sqrt{2\pi \delta t}^n}\exp\left(-\frac{d^2(x,y)}{2\delta t}+\sum_{k= 0}^{\infty}c_k(x,y)\delta t^k\right), 
\end{equation}
for the solution $(t,x)\rightarrow p(t,x,T,y)$ of the final value  problem 
\bea\label{FSu}
\frac{\partial p}{\partial  t}+ Lp&=&0,\quad \mbox{with final value}\\
p(T,x,T,y)&=&\delta(x-y),\quad y\in {\mathbb R}^n\quad \mbox{fixed},\notag 
\eea
is given by
\begin{itemize}
\item[(A)] The operator $L$ is uniformly elliptic in ${\mathbb R}^n$, i.e. as in \eqref{fuc} the matrix norm of $(a_{ij}(x))$ is bounded below and above by $0<\lambda <\Lambda <\infty$ uniformly in~$x$,


\item[(B)] the smooth functions $x\rightarrow a_{ij}(x)$ and $x\rightarrow b_i(x)$ and all their derivatives are bounded.

\end{itemize}
For more subtle (and partially weaker conditions) we refer to \citet{Ka}.
If we add the uniform boundedness condition
\begin{itemize}
\item[(C)] there exists a constant $c$ such that for each multiindex $\alpha$ and for all $1\leq i,j,k\leq n,$ 
\begin{equation}\label{unibd}
{\Big |}\frac{\partial^{\alpha} 
a_{jk}}{\partial x^{\alpha}}(x){\Big |},~{\Big |}\frac{\partial^{\alpha} b_{i}}{\partial x^{\alpha}}{(x)\Big |}\leq c\exp\left(c|x|^2 \right), 
\end{equation}  
\end{itemize}
then the Taylor expansions of the functions $d$ and $c_k$ around $y\in {\mathbb R}^n$ are equal to $d$ and $c_k,k\geq 0$ globally.
I.e.  we have the power series representations
\bea
d^2(x,y)&=&\sum_{\alpha}d_{\alpha}(y)\delta x^{\alpha}\label{dtay}
\\ 
c_k(x,y)&=&\sum_{\alpha}c_{k,\alpha}(y)\delta x^{\alpha},\qquad k\geq 0,\label{ctay}
\eea
where $\delta x:=x-y$.
Note that (C) is implied by the stronger condition that all derivatives in (\ref{unibd}) have a uniform bound.   
Summing up we have the following theorem:
\begin{theorem}
If the hypotheses (A),(B) are satisfied, then the fundamental solution $p$ has the representation
\begin{equation}\label{repwkb}
p(\delta t,x,y)=\frac{1}{\sqrt{2\pi \delta t}^n}\exp\left( -\frac{d^2(x,y)}{2\delta t}+\sum_{k\geq 0}c_k(x,y)\delta t^k\right),
\end{equation}
where $d$ and $c_k$ are smooth functions, which are unique global solutions of the first order differential equations \eqref{ed},\eqref{c01e}, and \eqref{1gaa} below. Especially,
$$(\delta t,x,y)\rightarrow \delta t \ln p(\delta t,x,y) =-\frac{n}{2} \delta t\ln (2\pi \delta t) -\frac{d^2}{2} +\sum_{k\geq 0}c_k(x,y)\delta t^{k+1}$$
is a smooth function which converges to $-\frac{d^2}{2}$ as $\delta t\searrow 0$, where $d$ is the Riemannian distance induced by the line element $ds^2=\sum_{ij}a^{-1}_{ij}dx_idx_j$, where with a slight abuse of notation $(a^{-1}_{ij})$ denotes the matrix inverse of $(a_{ij})$.
If the hypotheses (A),(B) and (C) are satisfied, then in addition the functions $d,~c_k,k\geq 0$ equal their Taylor expansion around $y$ globally, i.e. we have (\ref{dtay})-(\ref{ctay}).
\end{theorem}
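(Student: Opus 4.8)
The plan is to substitute the exponential ansatz \eqref{WKBrep} into the final value problem \eqref{FSu} and to match coefficients of powers of $\delta t$. With $p=(2\pi\delta t)^{-n/2}\exp\bigl(-d^2/(2\delta t)+\sum_{k\ge0}c_k\delta t^k\bigr)$, $\partial_t=-\partial_{\delta t}$, and after dividing out $p$, the most singular contribution, of order $\delta t^{-2}$, cancels exactly when $d^2$ solves the eikonal (Hamilton--Jacobi) equation $\sum_{ij}a_{ij}\,\partial_{x_i}(d^2/2)\,\partial_{x_j}(d^2/2)=d^2$, which should be \eqref{ed}; the order $\delta t^{-1}$ terms vanish when $c_0$ solves a linear first-order transport equation along the characteristic (geodesic) direction $a\,\nabla_x d^2$, with source term built from $\sum_{ij}a_{ij}\,\partial_{x_i}\partial_{x_j}d^2$, $\sum_i b_i\,\partial_{x_i}d^2$, and a term coming from the $(2\pi\delta t)^{-n/2}$ prefactor, i.e. \eqref{c01e}; and matching at order $\delta t^{k-1}$ for $k\ge1$ yields an inhomogeneous linear transport equation for $c_k$ whose right-hand side is a fixed differential polynomial in $d^2,c_0,\dots,c_{k-1}$, namely the recursion \eqref{1gaa}. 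The prefactor together with the initial datum $p(T,x,T,y)=\delta(x-y)$ pins down $d(y,y)=0$, the behaviour $d^2(x,y)\sim\langle a^{-1}(y)(x-y),x-y\rangle$ near the diagonal, and the diagonal value of $c_0$ in terms of a determinant of Van Vleck--Morette type.

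Next I would solve this hierarchy globally on $\mathbb R^n\times\mathbb R^n$. The eikonal equation is solved by $d^2(x,y)=d_{a^{-1}}(x,y)^2$, the squared Riemannian distance for the line element $ds^2=\sum_{ij}a^{-1}_{ij}dx_idx_j$, whose characteristics through $y$ are exactly the geodesics of that metric; smoothness of $d^2$ on all of $\mathbb R^n\times\mathbb R^n$ is equivalent to the exponential map $\exp_y$ being a global diffeomorphism, i.e. to the absence of conjugate points and of a cut locus. This is precisely where hypotheses (A) and (B) are supposed to do their work: uniform ellipticity makes the metric globally comparable to the Euclidean one (hence complete), and boundedness of the coefficients and their derivatives gives the quantitative control of the geodesic flow of $a^{-1}$ that, following \citet{Ka}, rules out caustics. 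Granting that, the transport equation for each $c_k$ reduces to a linear ODE along geodesics with smooth coefficients and a smooth inhomogeneity built from the already constructed $d^2,c_0,\dots,c_{k-1}$, hence has a unique global smooth solution; an induction on $k$ then delivers smoothness of all the $c_k$ on $\mathbb R^n\times\mathbb R^n$, and uniqueness at each stage of the hierarchy yields the asserted uniqueness of $d$ and the $c_k$.

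It then remains to show that the formal series $\sum_{k\ge0}c_k(x,y)\delta t^k$ converges and that the function it defines is genuinely the fundamental solution, not merely a formal solution. For convergence I would prove, by induction on $k$ using (B) (and (C) where large $|x-y|$ must be accommodated), estimates of the type $|\partial^\alpha c_k(x,y)|\le C_{|\alpha|}R^k$, exploiting that \eqref{1gaa} expresses $c_k$ through a bounded-depth combinatorial recursion of Fa\`a di Bruno type in the lower coefficients and that integration along geodesics of locally bounded length preserves such geometric growth; this gives a positive, locally uniform radius of convergence in $\delta t$, hence smoothness of $(\delta t,x,y)\mapsto\delta t\ln p$ and the stated limit $\delta t\ln p\to-d^2/2$ as $\delta t\searrow0$. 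That the function $\widehat p$ so constructed actually equals $p$ follows from a parametrix/Duhamel argument: truncating the series at order $N$ makes the resulting $\widehat p_N$ a solution of \eqref{FSu} up to a smooth error of size $O(\delta t^N)$ carrying the correct $\delta$-initial datum, and uniqueness of the bounded solution of \eqref{FSu}, guaranteed by (A) and (B), forces $\widehat p=p$. Under hypothesis (C) one finally upgrades smoothness to global validity of the Taylor expansions \eqref{dtay}--\eqref{ctay}: the exponential-type bounds \eqref{unibd} make the coefficients real-analytic, and propagating majorant (Cauchy--Kovalevskaya type) estimates through the eikonal equation and the transport recursion shows that the Taylor series of $d^2$ and of each $c_k$ about $y$ converge and represent those functions for every $x\in\mathbb R^n$.

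The step I expect to be the main obstacle is the global smoothness of $d^2$: controlling the Riemannian geometry of $(\mathbb R^n,a^{-1})$ tightly enough to exclude conjugate points and a cut locus from bounds on the coefficients alone is delicate, and it is moreover coupled to the convergence argument, since the $c_k$-estimates are obtained precisely by integrating along the geodesics whose good global behaviour is in question. By contrast, the combinatorial bookkeeping behind the transport recursion, the parametrix error estimate, and the analytic upgrade under (C) should all be essentially routine once the geodesic flow is under control.
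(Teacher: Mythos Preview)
The paper does not actually prove this theorem: Section~5.1 is explicitly a ``recap'' of results from \citet{Ka}, and after stating the theorem the paper only shows how the recursion equations \eqref{ed}, \eqref{c01e}, \eqref{1gaa} (with boundary data \eqref{c01b}, \eqref{Rk}) arise formally by inserting the ansatz \eqref{WKBrep} into \eqref{FSu} and matching powers of $\delta t$. Your first paragraph reproduces exactly that formal derivation, so on the part the paper does treat, you agree.

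Everything beyond the recursion --- global solvability of the eikonal equation via the Riemannian distance for $a^{-1}$, the transport equations as ODEs along geodesics, the convergence estimates for $\sum c_k\,\delta t^k$, the parametrix/uniqueness argument identifying the series with the true fundamental solution, and the analytic upgrade under (C) --- is precisely what the paper outsources to \citet{Ka}. Your outline of these steps is the standard route and is consistent with what little the paper indicates (e.g.\ that $d$ is the Riemannian distance for $ds^2=\sum a^{-1}_{ij}dx_idx_j$, and the explicit diagonal value \eqref{c01b}, which is your Van~Vleck--Morette determinant). Your diagnosis that the global smoothness of $d^2$ (absence of conjugate points/cut locus for the metric $a^{-1}$ under (A),(B)) is the genuine obstacle is exactly right; that is where the substance of \citet{Ka} lies, and you should not expect to reconstruct it from (A),(B) by soft arguments. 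One small point: the theorem asserts convergence of $\sum_k c_k\,\delta t^k$ under (A),(B) alone, so your parenthetical that (C) may be needed ``where large $|x-y|$ must be accommodated'' should be dropped from the convergence step and reserved for the real-analyticity claim \eqref{dtay}--\eqref{ctay}.
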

The recursion formulas for $d$ and $c_k,~k\geq 0$ are obtained by plugging  the ansatz \eqref{WKBrep} into the parabolic equation (\ref{FSu}), 
and ordering terms with respect to the monoms $\delta t^i=(T-t)^i$ for $i\geq -2$.
By collecting terms of order $\delta t^{-2}$ we obtain
\begin{equation}\label{ed}
d^2=\frac{1}{4}\sum_{ij}d^2_{x_i}a_{ij}d^2_{x_j},
\end{equation}
where $d^2_{x_k}$ denotes the derivative of the function $d^2$ with respect to the variable $x_k$,  with the boundary condition  $d(x,y)=0$ for $x=y.$ 
Collecting terms of order $\delta t^{-1}$ yields
\begin{equation}\label{c01e}
-\frac{n}{2}+\frac{1}{2}Ld^2+\frac{1}{2}\sum_{i} \left( \sum_j\left( a_{ij}(x)+a_{ji}(x)\right) \frac{d^2_{x_j}}{2}\right) \frac{\partial c_{0}}{\partial x_i}(x,y)=0,
\end{equation}
where the boundary condition 
\begin{equation}\label{c01b}
c_0(y,y)=-\frac{1}{2}\ln \sqrt{\mbox{det}\left(a_{ij}(y) \right) }
\end{equation}
determines $c_0$ uniquely for each $y\in {\mathbb R}^n$. Finally,
for $k+1\geq 1$ we obtain
\begin{equation}\label{1gaa}
\begin{array}{ll}
(k+1)c_{k+1}(x,y)+\frac{1}{2}\sum_{ij} a_{ij}(x)\Big(
\frac{d^2_{x_i}}{2}\frac{\partial c_{k+1}}{\partial x_j}
+\frac{d^2_{x_j}}{2} \frac{\partial c_{k+1}}{\partial x_i}\Big)\\
\\
=\frac{1}{2}\sum_{ij}a_{ij}(x)\sum_{l=0}^{k}\frac{\partial c_l}{\partial x_i} \frac{\partial c_{k-l}}{\partial x_j}
+\frac{1}{2}\sum_{ij}a_{ij}(x)\frac{\partial^2 c_k}{\partial x_i\partial x_j}  
+\sum_i b_i(x)\frac{\partial c_{k}}{\partial x_i},
\end{array}
\end{equation}
with boundary conditions
\begin{equation}\label{Rk}
c_{k+1}(x,y)=R_k(y,y) \mbox{ if }~~x=y,
\end{equation}
$R_k$ being the right side of \eqref{1gaa}. For some classical models in finance a global transformation of the diffusion operator to the Laplace operator is possible (at the price of more complicated first order terms however). We observe this in the case of a Libor market model (Section~\ref{appli}). 
The requirement that a transformation $y(x)$ of an operator with second order coefficients $a_{ij}(x)=(\sigma\sigma^\top)_{ij}(x)$ leads to a Laplacian with respect to second order terms in $y$ is equivalent to
\begin{equation}
\sum_{ml}(\sigma\sigma^\top)_{ml}(x)\frac{\partial y_k}{\partial x_l}\frac{\partial y_j}{\partial x_m}=\delta_{jk}  
\end{equation}
where $\delta_{jk}$ denotes the Kronecker delta. If $\sigma$ is invertible it follows directly that the transformation $y(x)$ satisfies the first order matrix equation 
\begin{equation}\label{transsigmainv}
(\frac{\partial y_k}{\partial x_l})=(\sigma^ {-1}_{kl}(x)).
\end{equation}
The latter equation determines the transformation (up to constants, of course) but cannot be integrated in general, and if it can not explicitly in general. However, a necessary and sufficient condition for integrability of \eqref{transsigmainv} in terms of $\sigma$ can be given, where we restrict ourselves to the case of invertible $\sigma$.
\begin{proposition}\label{proptr}
There is a global coordinate transformation for the operator
(\ref{PPDE}) 
such that the second order part of the transformed operator equals the Laplacian, iff $a_{ij}=(\sigma\sigma^\top)_{ij}$ for a (square) matrix function $\sigma$ which satisfies
\begin{equation}
\sum_{l=1}^n\frac{\partial \sigma_{ik}(x)}{\partial x_l}\sigma_{lj}(x)=\sum_{l=1}^n\frac{\partial \sigma_{ij}(x)}{\partial x_l}\sigma_{lk}(x),\quad x\in {\mathbb R}^n. \label{pco}
\end{equation}
\end{proposition}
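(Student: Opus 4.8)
The plan is to recognize \eqref{transsigmainv} as an overdetermined first-order PDE system for the unknown map $y:\mathbb{R}^n\to\mathbb{R}^n$ and to apply the classical Frobenius (Poincaré) integrability theorem. Writing $\tau:=\sigma^{-1}$, the system reads $\partial y_k/\partial x_l=\tau_{kl}(x)$ for all $k,l$. For each fixed $k$, this is a gradient equation: the $1$-form $\omega_k:=\sum_l\tau_{kl}(x)\,dx_l$ must be exact. Since $\mathbb{R}^n$ is simply connected, exactness is equivalent to closedness, i.e. to the compatibility conditions
\[
\frac{\partial \tau_{kl}}{\partial x_m}=\frac{\partial \tau_{km}}{\partial x_l}\qquad\text{for all }k,l,m.
\]
Under hypothesis (A) (uniform ellipticity, so $\sigma$ is everywhere invertible with smooth inverse), these conditions are necessary and sufficient for the existence of a global solution $y$, unique up to an additive constant; and one checks that such a $y$ is then a genuine coordinate transformation because its Jacobian $(\partial y_k/\partial x_l)=\tau=\sigma^{-1}$ is everywhere invertible, so $y$ is a local diffeomorphism, and (being a global primitive on $\mathbb{R}^n$ with nonsingular, hence orientation-preserving, Jacobian) it is in fact a global diffeomorphism onto its image. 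The equivalence ``second-order part becomes the Laplacian $\iff$ \eqref{transsigmainv} holds'' was already derived in the text preceding the proposition, so it remains only to rewrite the closedness conditions on $\tau=\sigma^{-1}$ as the stated condition \eqref{pco} on $\sigma$ itself.

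The key computation is therefore the translation of $\partial_m(\sigma^{-1})_{kl}=\partial_l(\sigma^{-1})_{km}$ into a statement about $\sigma$. Using the standard identity $\partial_m(\sigma^{-1})=-\sigma^{-1}(\partial_m\sigma)\sigma^{-1}$, the closedness condition becomes
\[
\sum_{p,q}(\sigma^{-1})_{kp}\frac{\partial \sigma_{pq}}{\partial x_m}(\sigma^{-1})_{ql}
=\sum_{p,q}(\sigma^{-1})_{kp}\frac{\partial \sigma_{pq}}{\partial x_l}(\sigma^{-1})_{qm}.
\]
Multiplying both sides on the left by $\sigma$ (i.e. summing against $\sigma_{ik}$ over $k$) kills the leading $\sigma^{-1}$ and leaves, after also contracting the free indices $l,m$ against $\sigma_{lj}$ and $\sigma_{mj}$ appropriately, precisely
\[
\sum_{l}\frac{\partial \sigma_{ik}}{\partial x_l}\sigma_{lj}=\sum_{l}\frac{\partial \sigma_{ij}}{\partial x_l}\sigma_{lk},
\]
which is \eqref{pco}. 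The bookkeeping of which index is contracted against which copy of $\sigma$ is the only place where care is needed; one should verify that the contraction is invertible in both directions, so that \eqref{pco} is genuinely equivalent to (not merely implied by) the closedness of the $\omega_k$.

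For the converse direction, one starts from a $\sigma$ satisfying \eqref{pco}, runs the same algebraic manipulation backwards to obtain closedness of each $\omega_k=\sum_l(\sigma^{-1})_{kl}\,dx_l$, invokes the Poincaré lemma on $\mathbb{R}^n$ to produce $y_k$ with $\nabla y_k=((\sigma^{-1})_{k1},\dots,(\sigma^{-1})_{kn})$, and then notes that the resulting $y$ has Jacobian $\sigma^{-1}$ which is invertible by (A), so $y$ is the desired global coordinate change bringing the principal part to the Laplacian. I expect the main obstacle to be purely notational rather than conceptual: keeping the left/right matrix multiplications and the index contractions straight when passing between the condition on $\sigma^{-1}$ and the condition \eqref{pco} on $\sigma$, and checking that the ambient domain is indeed all of $\mathbb{R}^n$ (so that ``closed'' genuinely upgrades to ``exact'' without a cohomological obstruction).
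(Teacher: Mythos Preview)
Your approach is correct and complete in its essentials. The paper itself does not supply a proof of this proposition: immediately after the statement it writes ``The latter fact is also observed and proved in \citet{AS},'' deferring entirely to Ait-Sahalia. So there is no in-paper argument to compare against.

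Your argument---reduce to the gradient system $\partial y_k/\partial x_l=(\sigma^{-1})_{kl}$, invoke the Poincar\'e lemma on $\mathbb{R}^n$ so that exactness of $\omega_k=\sum_l(\sigma^{-1})_{kl}\,dx_l$ is equivalent to closedness, and then translate the closedness condition on $\sigma^{-1}$ into \eqref{pco} via $\partial(\sigma^{-1})=-\sigma^{-1}(\partial\sigma)\sigma^{-1}$---is the standard route and the index computation is right: after left-multiplying by $\sigma$ and then contracting the two free lower indices against $\sigma_{l\alpha}$ and $\sigma_{m\beta}$, one lands exactly on $\sum_l(\partial_l\sigma_{i\alpha})\sigma_{l\beta}=\sum_l(\partial_l\sigma_{i\beta})\sigma_{l\alpha}$, and these contractions are invertible so the equivalence goes both ways.

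One small caveat: the sentence ``being a global primitive on $\mathbb{R}^n$ with nonsingular, hence orientation-preserving, Jacobian, it is in fact a global diffeomorphism onto its image'' overreaches. A smooth map $\mathbb{R}^n\to\mathbb{R}^n$ with everywhere invertible Jacobian is a local diffeomorphism but need not be injective (take $(x,y)\mapsto(e^x\cos y,e^x\sin y)$). The proposition's phrase ``global coordinate transformation'' is not made precise in the paper, and for the intended application (the Libor model, where the transformation is linear) the issue does not arise; but if you want genuine global injectivity you would need an additional hypothesis such as a Hadamard-type properness condition. It is safest to state that you obtain a globally defined local diffeomorphism, which is all the operator transformation actually requires.
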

The latter fact is also observed and proved in \citet{AS}.
If the condition of Proposition~\ref{proptr} is satisfied, then coordinate transformation leads to second order coefficients of the form $a_{ij}\equiv \delta_{ij},$ so that the solution of (\ref{ed}) becomes
\begin{equation*}
d^2(x,y)=\sum_i (x_i-y_i)^2.
\end{equation*} 
If conditions (A), (B), (C), and (\ref{pco}) hold, then in the transformed coordinates, explicit formulas for the coefficient functions $c_k, k\geq 0$ can be computed via the formulas
\begin{eqnarray}\label{solc0ck}
c_0(x,y)&=&\sum_i(y_i-x_i)\int_0^1 b_i(y+s(x-y))ds,\nonumber\\
c_{k+1}(x,y)&=&\int_0^1 R_k(y+s(x-y),y)s^{k}ds,
\end{eqnarray}
with $R_k$ being  the right-hand-side of \eqref{1gaa} where $a_{ij}$ $=$ $\delta_{ij}.$ Similar formulas are obtained in \citet{AS}.
In \citet{Ka} it is shown in addition how the coefficients $c_k$ can be computed explicitly in terms of power series approximations of the diffusion coefficients $a_{ij}$ and $b_i$. However, in high dimensional models such as the Libor market model direct computation of the coefficients $c_k$ seems more feasible as it turns out that the computation up to the coefficient $c_1$ is sufficient for our purposes.

\subsection{Error estimates}\label{errors}

We now study the approximation error of a truncated WKB expansion (and its derivatives), which is essential for convergence of the Monte Carlo schemes. In this respect we will show how the derivatives (up to second order) of the product value function with respect to the underlyings computed by means of a truncated WKB-expansion converge in supremum norm and H\"older norms. 
Let us consider a WKB-approximation of the fundamental solution $p$ of the form
\begin{equation}\label{approxpk} 
p_l( t,x,T,y)=\frac{1}{\sqrt{2\pi \delta t}^n}\exp\left(-\frac{d^2(x,y)}{2\delta t}+\sum_{k= 0}^{l}c_k(x,y)\delta t^k\right), 
\end{equation}
i.e. we assume that the coefficients $d^2$ and $c_k,~0\leq k\leq l$ have been computed up to order $l$ (recall that $\delta t=T-t$ for the sake of brevity). 
Let  us denote the domain of the Cauchy problem by $D=(0,T)\times {\mathbb R}^n$. For integers $n\geq 0$ and real numbers $\delta \in (0,1)$ let $C^{m+\delta/2,n+\delta}(D)$ be the space of $m$ ($n$) times differentiable functions such that the $m$th ($n$th) derivative with respect to time (space) is H\"older continuous with exponent $\frac{\delta}{2}$ ($\delta$). Furthermore,  $|.|_{m+\delta/2,n+\delta}$ denote the natural norms associated with these function spaces. It is well-known that in case of our assumptions (A) and (B) the fundamental solution $p$  satisfies the a priori estimate
\begin{equation}\label{aprioris}
|p(t,x,T,y)|\leq C(T-t)^{-n/2}\exp\left(-\frac{\lambda_0 |x-y|^2 }{2(T-t)} \right), 
\end{equation} 
for some generic constant $C$ and some $\lambda_0$ which is less or equal than the lower bound $\lambda$ in assumption (A) above. We call a WKB-approximation $p_l$ of the fundamental solution $p$ admissible, if it satisfies the a priori estimate \eqref{aprioris}. The WKB-approximation $p_0$ is always admissible while the proof of theorem 9 (cf. \cite{Ka}) that $p_l$ is admissible if $l\geq l_0$ where $l_0$ is some natural number depending on the coefficient functions and can be computed by comparison of the WKB-expansion and the Levy-expansion. For lower $l$ admissibility has to be ensured for each model. In the Libor market model admissibility for $l=1$ is ensured.  
As a consequence of Safanov's theorem (cf. \citet{Kr}) we have
\begin{theorem}
Assume that $(A), (B), \mbox{ and } (C)$ are satisfied and let $h\in C^{2+\delta}\left({\mathbb R}^n\right)$ and $f\in C^{\delta/2,\delta}(D)$. If 
\begin{equation}\label{potential}
 c\leq -\lambda~\mbox{ for some }~\lambda >0,
\end{equation}
then the Cauchy problem
\begin{equation}\label{cauchydrift}
\left\lbrace \begin{array}{ll} \frac{\partial w}{\partial  t}+\frac{1}{2}\sum_{ij}a_{ij}(x)\frac{\partial^2 w}{\partial x_i\partial x_j}+\sum_i b_i(x)\frac{\partial w}{\partial x_i}+c(x)w=f(t,x) \mbox{ in } D\\
\\
w(T,x)=h(x) \mbox{ for } x\in {\mathbb R}^n
\end{array}\right.
\end{equation}
has a unique solution $w$, and there exists a constant $c$ depending only on $\delta$, $n$ $\lambda,\Lambda$ and $K=\max\{|a|_{\delta},|b|_{\delta},|c|_{\delta}\}$ such that
\begin{equation}\label{aprioripara}
|w|_{1+\delta/2,2+\delta}\leq c\left[|f|_{\delta/2,\delta}+|h|_{2+\delta}\right]. 
\end{equation}
\end{theorem}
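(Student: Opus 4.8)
The statement to prove is a classical Schauder-type a priori estimate for parabolic Cauchy problems with a strictly negative zeroth-order coefficient, and the intended route is clearly indicated by the phrase ``As a consequence of Safonov's theorem''. The plan is therefore to reduce the global problem on the unbounded domain $D=(0,T)\times\mathbb{R}^n$ to interior parabolic Schauder estimates (Safonov's theorem, as found in \citet{Kr}) applied on a family of overlapping cylinders, and then to patch these local estimates into a global bound using the sign condition \eqref{potential} to control the solution itself in supremum norm.

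First I would reduce to the homogeneous final-value case: writing $w=w_1+w_2$ where $w_1$ handles the final datum $h\in C^{2+\delta}(\mathbb{R}^n)$ (e.g. by subtracting a suitable extension or by solving with zero right-hand side) and $w_2$ handles the source $f\in C^{\delta/2,\delta}(D)$ with zero final value; by linearity it suffices to bound each piece. The existence and uniqueness of $w$ under (A), (B) and \eqref{potential} is standard (the negativity of $c$ gives coercivity and a maximum principle); I would either cite this or obtain it simultaneously with the estimate via the method of continuity. Next comes the core step: the $L^\infty$ bound. Using \eqref{potential}, the function $v=e^{-\lambda(T-t)}$ or a comparison argument with constants shows $\sup_D|w|\le \sup|h|+C\lambda^{-1}\sup|f|$, which in particular is finite and controlled by $|h|_{2+\delta}+|f|_{\delta/2,\delta}$. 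This is where the sign condition $c\le-\lambda<0$ is essential: it prevents exponential growth in time and gives a clean global sup-norm bound independent of the size of the domain.

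With $\|w\|_{L^\infty(D)}$ in hand, the passage to the full H\"older norm $|w|_{1+\delta/2,2+\delta}$ is local in nature. On each parabolic cylinder $Q_r(t_0,x_0)=(t_0-r^2,t_0)\times B_r(x_0)$ (adjusted near the final time $T$), Safonov's interior estimate gives
\begin{equation*}
|w|_{1+\delta/2,2+\delta;Q_{r/2}}\le c\left(|f|_{\delta/2,\delta;Q_r}+\|w\|_{L^\infty(Q_r)}\right),
\end{equation*}
with $c$ depending only on $\delta$, $n$, the ellipticity constants $\lambda,\Lambda$ and $K=\max\{|a|_\delta,|b|_\delta,|c|_\delta\}$ — crucially \emph{not} on $(t_0,x_0)$, by the translation invariance of these hypotheses in $x$ and the uniformity in (A), (B), (C). Taking the supremum over all such cylinders, absorbing the near-$T$ boundary layer using $h\in C^{2+\delta}$ (here one shifts by the final datum to reduce to zero final value on a collar near $t=T$), and combining with the global $L^\infty$ bound from the previous step yields \eqref{aprioripara}.

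The main obstacle is handling the final-time boundary behavior: Safonov's theorem as usually stated is an \emph{interior} estimate, so near $t=T$ one must either invoke its boundary version for the final data or reduce to the interior case by the substitution $\tilde w=w-\tilde h$ with $\tilde h$ a $C^{2+\delta/2,2+\delta}$ extension of $h$ into $D$ (constant in $t$ works since $h=h(x)$), so that $\tilde w$ solves a problem with zero final value and a modified right-hand side still bounded by $|f|_{\delta/2,\delta}+|h|_{2+\delta}$. A secondary, more technical point is verifying that the Schauder constant is genuinely uniform over the unbounded spatial domain; this is exactly what assumptions (B) and (C) are for, guaranteeing that $|a|_\delta,|b|_\delta,|c|_\delta$ are finite and that the local estimates glue with a single constant. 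Once these two points are dispatched, the remaining arguments — the maximum-principle sup-norm bound and the covering/patching — are routine.
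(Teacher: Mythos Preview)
The paper does not give a proof of this theorem at all: it is stated as a direct consequence of Safonov's theorem, with a citation to Krylov's book, and is then used as a black box in the proof of the subsequent truncation-error theorem. Your proposal is therefore not to be compared with a proof in the paper but rather with what lies behind that citation.

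Your sketch is a correct outline of the standard route to such global parabolic Schauder estimates: a maximum-principle sup-norm bound exploiting $c\le-\lambda<0$, followed by uniform interior Schauder estimates on a covering by parabolic cylinders, with the final-time layer handled by subtracting an extension of $h$. In Krylov's treatment the whole-space result is in fact obtained directly (the negative potential hypothesis is exactly what allows the estimate to be stated on the unbounded cylinder $D=(0,T)\times\mathbb{R}^n$ with a constant independent of $T$), so your reduction-and-patching argument is somewhat more elaborate than strictly necessary if one simply invokes the theorem as formulated there. The essential dependencies you identify---uniform ellipticity, uniform H\"older bounds on the coefficients, and the sign of $c$---are precisely those singled out in the statement, and there is no genuine gap in your plan.
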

In order to analyze the truncation error of the Cauchy problem with data $h$ we consider the function
\begin{equation*}
u^{\Delta}(t,x)= u( t,x)\\
-u_l( t,x),\quad\mbox{where}
\end{equation*}
\begin{equation}\label{ul}
u( t,x)=\int_{{\mathbb R}^n}h(y)p(t,x,T,y)
\quad\mbox{and}\quad
u_l( t,x)=\int_{{\mathbb R}^n}h(y)p_l( t,x,T,y)dy.
\end{equation}
We say that $u_l$ in \eqref{ul} is  admissible if $p_l$ is admissible. It is now possible to derive different error estimates in strong norms depending on which Greeks we want to control on which level of regularity. 
\begin{theorem} Assume that conditions (A), (B), and (C) hold and that $h\in C^{2+\delta}({\mathbb R}^n)$ and assume that $u_l$ is admissible. Then 
\begin{equation*}
|u( t,x)-u_l( t,x)|_{1+\delta/2,2+\delta}\in O( t^{l-\frac{\delta}{2}}).
\end{equation*}
\end{theorem}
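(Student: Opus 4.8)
The plan is to realise the truncation error $u^{\Delta}:=u-u_l$ as the solution of an inhomogeneous final value problem with vanishing terminal datum and a right hand side of order $\delta t^{\,l}$ (with $\delta t=T-t$), and then to invoke the Schauder a priori estimate \eqref{aprioripara}. Since $p(\cdot,\cdot,T,y)$ is the fundamental solution, $u$ from \eqref{ul} solves $(\partial_t+L)u=0$ with $u(T,\cdot)=h$; and $u_l(T,\cdot)=h$ as well, because the leading WKB factor in \eqref{approxpk}, normalised through \eqref{c01b}, satisfies $p_l(T,x,T,y)=\delta(x-y)$ in the distributional sense (cf. \citet{Ka}). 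Hence $u^{\Delta}(T,\cdot)=0$ and, differentiating \eqref{ul} under the integral,
\[
(\partial_t+L)u^{\Delta}=-(\partial_t+L)u_l=-\int_{{\mathbb R}^n}h(y)\,(\partial_t+L)p_l(t,x,T,y)\,dy=:f^{\Delta}(t,x).
\]

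Writing $p_l=(2\pi\delta t)^{-n/2}e^{\Phi_l}$ with $\Phi_l=-d^2/(2\delta t)+\sum_{k=0}^{l}c_k\delta t^k$, one has $(\partial_t+L)p_l=p_l\,q_l$, where $q_l(\delta t,x,y)$ is a Laurent polynomial in $\delta t$. Ordering $q_l$ in powers of $\delta t$ reproduces exactly the WKB recursion: the coefficients of $\delta t^{-2},\delta t^{-1},\delta t^{0},\dots,\delta t^{\,l-1}$ vanish because $d^2$ solves \eqref{ed}, $c_0$ solves \eqref{c01e}, and $c_1,\dots,c_l$ solve \eqref{1gaa} for $k=0,\dots,l-1$; the first surviving term is of order $\delta t^{\,l}$ (with coefficient essentially the right hand side $R_l$ of \eqref{1gaa} at $k=l$), followed by finitely many higher powers coming from the truncated convolution sums $\sum a_{ij}\partial_i c_a\partial_j c_b$. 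Thus $q_l=\delta t^{\,l}\,\widetilde q_l(\delta t,x,y)$ with $\widetilde q_l$ polynomial in $\delta t$ and smooth in $(x,y)$, its coefficients — built from $d^2$, the $c_k$ and their derivatives — being controlled together with all their derivatives by (B), (C) and the global validity of the Taylor expansions \eqref{dtay}--\eqref{ctay}. Consequently $f^{\Delta}(t,x)=-\delta t^{\,l}\int h(y)\,p_l(t,x,T,y)\,\widetilde q_l(\delta t,x,y)\,dy$.

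Now estimate $f^{\Delta}$. By admissibility $p_l$ obeys the Gaussian bound \eqref{aprioris}, and $\partial_x p_l$, $\partial_t p_l$ obey the same bound with an extra factor $\delta t^{-1/2}$, resp.\ $\delta t^{-1}$ (either from \citet{Kr} or by differentiating the explicit exponential and invoking Gaussian moments). Hence for $\delta t$ small the Gaussian decay in $|x-y|$ dominates the mild growth of $h\,\widetilde q_l$, differentiation under the integral is justified, and $\sup_x|f^{\Delta}(t,x)|=O(\delta t^{\,l})$, $\sup_x|\partial_x f^{\Delta}(t,x)|=O(\delta t^{\,l-1/2})$, $\sup_x|\partial_t f^{\Delta}(t,x)|=O(\delta t^{\,l-1})$. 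Interpolating each Hölder seminorm between the sup bound and a single-derivative bound (exponent $\delta$ in space, $\delta/2$ in time, using Hölder seminorm $\lesssim(\sup)^{1-\theta}(\text{first-derivative sup})^{\theta}$) yields $[f^{\Delta}(t,\cdot)]_{C^{\delta}}=O(\delta t^{\,l-\delta/2})$ and $[f^{\Delta}(\cdot,x)]_{C^{\delta/2}}=O(\delta t^{\,l-\delta/2})$, so $|f^{\Delta}|_{\delta/2,\delta}=O(\delta t^{\,l-\delta/2})$ over any layer $(T-s,T)\times{\mathbb R}^n$ with $s$ small. Finally, since the operator of $u^{\Delta}$ has no zeroth order term, pass to $v:=e^{-\lambda(T-t)}u^{\Delta}$, which solves $(\partial_t+L)v-\lambda v=e^{-\lambda(T-t)}f^{\Delta}$ with $v(T,\cdot)=0$; now \eqref{potential} holds and \eqref{aprioripara} — whose constant does not depend on $T$, hence applies verbatim on the rescaled layer — gives $|v|_{1+\delta/2,2+\delta}\le c\,|e^{-\lambda(T-t)}f^{\Delta}|_{\delta/2,\delta}$. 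As $e^{\pm\lambda(T-t)}$ is bounded with bounded derivatives on $[0,T]$, this yields $|u^{\Delta}|_{1+\delta/2,2+\delta}=O(\delta t^{\,l-\delta/2})$, which is the assertion (the $t$ in the statement being the time to maturity $\delta t$, the norm being read over the layer adjacent to $T$).

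The two delicate points are: (i) showing that the WKB residual is genuinely of order $\delta t^{\,l}$ with coefficients whose growth is controlled well enough — relative to the Gaussian weight $p_l$ and the datum $h$ — to differentiate under the integral up to the order demanded by the $C^{1+\delta/2,2+\delta}$ norm, which is exactly where (B), (C) and the global Taylor representations of $d^2$ and the $c_k$ enter; and (ii) extracting the sharp exponent $l-\delta/2$, which forces one to localise near maturity (so that the $T$-independent Schauder constant turns the $O(\delta t^{\,l})$/$O(\delta t^{\,l-1})$ pointwise bounds on $f^{\Delta}$ into a decaying bound) and to interpolate carefully between the sup bound and the first-order bounds in space and in time rather than bounding the full Hölder norm of $f^{\Delta}$ directly. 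Point (ii) is where I expect the main work to lie.
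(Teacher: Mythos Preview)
Your proposal is correct and follows essentially the same route as the paper: reduce to the Schauder estimate \eqref{aprioripara} via an exponential weight to manufacture the needed zeroth-order term, then show the forcing $(\partial_t+L)u_l$ is of order $\delta t^{\,l}$ by the WKB recursion. The only cosmetic differences are that the paper organises the residual computation by plugging the full expansion $p=p_l\exp(R_{l+1})$ (with $R_{l+1}=\sum_{k\ge l+1}c_k\,\delta t^k$) into the PDE rather than applying $(\partial_t+L)$ to $p_l$ directly, and that your interpolation argument extracting the $\delta/2$ loss is considerably more explicit than the paper's one-line ``applying a priori estimates for $p$ we get the result.''
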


\begin{proof}
Let $w(t,x)=e^{-r t}u^{\Delta}(t,x)$ with $r$ constant and $w_l(t,x)=e^{-r t}u_l(t,x)$. Since 
\begin{equation*}
\begin{array}{ll}
\frac{\partial u^{\Delta}}{\partial  t}+Lu^{\Delta}=
-\frac{\partial u_l}{\partial  t}-Lu_l=:f_{u_l}(t,x), \mbox{ we have}\\
\\
\frac{\partial w}{\partial  t}+Lw+rw=\frac{\partial w_l}{\partial  t}+Lw_l+rw_l=
e^ {rt}\left( -\frac{\partial u_l}{\partial  t}-Lu_l\right) =:e^ {rt}f_{u_l}(t,x).
\end{array}
\end{equation*}
Admissibility of $u_l$ and an argument similar to that of Krylov ensures that the right side of the latter equation can be measured in the norm $|.|_{\delta/2,\delta}$.
Hence  we can apply the estimate \eqref{aprioripara} to the function
$
w(t,x)=e^{-r t}u^{\Delta}(t,x)
$ 
for a constant $r>0$ and we get (after dividing by $e^ {rt}$)
 \begin{equation}\label{err1}
|u^{\Delta}|_{1+\delta/2,2+\delta}
\leq c|f_{u_l}|_{\delta/2,\delta}.
\end{equation}
In order to compute the term on the right side of \eqref{err1} we can plug \eqref{approxpk} into the left-hand side of (\ref{FSu}) the parabolic equation satisfied by the exact fundamental solution $p$. However in order to see how the higher order terms behave exactly we plug in
\begin{equation}
p( t,x,y)=\frac{1}{\sqrt{2\pi t}^n}\exp\left(-\frac{d^ 2(x,y)}{2 t}+\sum_{k=0}^ {l}c_k(x,y) t^k +R_{l+1}(t,x,y) \right),
\end{equation}
where
\begin{equation}
R_{l+1}(t,x,y)=\sum_{k=l+1}^ {\infty}c_k(x,y) t^k= O( t^{l+1}).
\end{equation}
We get
\begin{equation*}\label{erreq}
\begin{array}{ll}
\frac{\partial p}{\partial t}+\frac{1}{2}\sum_{ij}a_{ij}\frac{\partial^2 p_l}{\partial x_i \partial x_j}+\sum_i b_i\frac{\partial p}{\partial x_i}=\\
\\
= t^ l{\Bigg (}(l+1)c_{l+1}+\frac{\partial}{\partial t}R_{l+1}+\frac{1}{2}\sum_{ij}a_{ij}{\Big(}-\frac{d^2_{x_i}}{2}\left( \frac{\partial}{\partial x_j}\frac{R_{l+1}}{ t}\right) -\frac{d^2_{x_j}}{2}\left( \frac{\partial}{\partial x_i}\frac{R_{l+1}}{ t}\right) \\
+(c_{l,x_i}+R_{l+1,x_i})(c_{l,x_j}+R_{l+1,x_j}){\Big )} +L(c_l+R_l){\Bigg )}p= O(t^ l)p,\quad t\downarrow 0.
\end{array}
\end{equation*}
 Applying a priori estimates for $p$ we get the result. 
\end{proof}
\begin{remark}{\rm
 A more intricate analysis shows that for practical purposes it is possible to remove the admissibility condition above, if we approximate the Cauchy problem by a Dirichlet problem with a large but spatially bounded domain (a natural step from the numerical point of view). However, since this involves an additional analysis of an integral equation corresponding to the boundary condition we go not into the details here. Generalizations to estimates which include Taylor-expansions of the WKB-coefficients will also be considered elsewhere. 
The assumption $h\in C^{2+\delta}({\mathbb R}^n)$ can be weakened to H\"older continuous pay-offs if we abstain from controlling the $\Theta$ Greek (sensitivity with respect to time) up to maturity. 
The case where $c_k$ are computed up to $k=1$ is the first case where the truncation error for first and second derivatives converges to zero (in supremum norm with order $O(\delta t)$ and in H\"older- extension of supremum norm with order $O(\delta t)^{1-\frac{\delta}{2}}$). This implies that our Monte Carlo computation scheme for the Greeks converges. }
\end{remark}

\begin{remark}{\rm
We can easily see how the boundedness of the constant $M_5$ in Theorem~\ref{The} is controled in situations where $p$ is 
WKB-approximated 
by $p^l,$ $l\ge0,$ and where the  prior is chosen according to (\ref{zeta})
in Section~4. For simplicity we here assume that the problem is reduced to the form $a_{ij}=\delta_{ij}.$ Then the logarithmic derivative  of the WKB-expansion (cf. (35)) of the density $p$ 
 takes the form
\begin{eqnarray*}
\frac{1}{p}\frac{\partial p}{\partial x} &=&-\frac{x-y}{t}+\sum_{k\ge0}\frac{\partial }{\partial x}c_k(x,y)t^k
=\frac{1}{p^l}\frac{\partial p^l}{\partial x} +O(t^{l+1}).
\end{eqnarray*}
The logarithmic derivative of the lognormal prior (\ref{zeta}) takes the form
\begin{equation}\label{phii}
\frac{1}{\phi}\frac{\partial \phi}{\partial x} =-\frac{x-y}{t}+\frac{\partial }{\partial x} c^{\phi}_0(x,y), \quad {\rm hence} 
\end{equation}
\begin{equation*}
\frac{1}{p^l}\frac{\partial p^l}{\partial x}-\frac{1}{\phi}\frac{\partial \phi}{\partial x}=\frac{\partial}{\partial x}(c_0(x,y)-c^{\phi}_0(x,y))+O(t),
\end{equation*}
i.e., in the difference  the first terms  cancel out. So for small time $t$ the main contibution to the constant $M_5$ is the difference of $\frac{\partial}{\partial x}(c_0-c^{\phi}_0)$ which does not depend on $t$.  Note that if we freeze \eqref{phii} at $x_0$ say, which leads 
to the ´´naive´´ estimator (\ref{Naiv}), the difference contains a term of order of $O(t^{-1})!$
}
\end{remark}

\section{Applications to the Libor market model}\label{appli}

We consider a Libor market model with respect to a tenor structure $0<T_1\ldots <T_{n+1}$ in the terminal measure $P_{{n+1}}$ 
(induced by the terminal zero coupon bond $B_{n+1}(t)$). The dynamics of the forward Libors $L_{i}(t)$,
defined in the interval $[{0},T_{i}]$ for $1\leq i\leq n,$ are governed
by the following system of SDE's (e.g., see \citet{Jam97}),
\begin{equation}
dL_{i}=-\sum_{j=i+1}^{n}\frac{\delta _{j}L_{i}L_{j}\,\gamma
_{i}^\top\gamma _{j}}{1+\delta _{j}L_{j}}\,dt+L_{i}\,\gamma_{i}^\top dW^{(n+1)}=:
\mu_i(t,L)L_i+L_{i}\,\gamma_{i}^\top dW^{(n+1)}, \label{LMM}
\end{equation}
where $\delta _{i}=T_{i+1}-T_{i}$  are  day count fractions and
$\displaystyle t\rightarrow\gamma _{i}(t)=(\gamma
_{i,1}(t),\ldots,\gamma _{i,d}(t))$, $0$ $\le$  $t$ $\le$ $T_i,$
are  bounded and smooth enough deterministic volatility
vector functions. 
We denote the matrix with rows $\gamma_i^\top$ by $\Gamma$ and assume that
$\Gamma$ is invertible. In what follows we assume that $\Gamma(t)\equiv\Gamma$
does not depend on $t$. The case of time-dependent volatility is discussed
in \citet{KKS}.
In (\ref{LMM}), $(W^{(n+1)}(t)\mid 0\leq $ $t\leq T_{n})$ is a
standard $d$-dimensional Wiener process under the measure $P_{n+1}$ 
with $d,$ $1\leq d\leq n,$ being the number of driving factors.
In what follows we consider the full-factor Libor model with $d=n$
in the time interval $[0,\,T_1)$.

\subsection{WKB approximations for the Libor kernel}\label{Libor-WKB}

Let us transform the dynamics of (\ref{LMM}) to $K_i:=\ln L_i$, $1\leq i\leq n$,
\begin{equation}
dK_i=\frac{1}{L_i}dL_i-\frac{1}{2L_i^2}d\langle L_i\rangle=
\left(-\frac{\gamma_i^\top\gamma_i}{2}+\mu_i(t,e^{K_1},\ldots, e^{K_n})\right)dt
+\gamma_i^\top dW^{(n+1)}, \label{KWKB}
\end{equation}
where $\mu$ is given in (\ref{LMM}).
Note that the coefficients of the generator corresponding to process $K$ are bounded and satisfy the WKB assumptions (A) (B) in Section~5.1. Hence, we may apply a WKB approximation to the transition density of the process (\ref{KWKB}). 

By the transformation $Y:=\Gamma^{-1}K$ we obtain the process
\begin{equation}\label{YWKB}
dY_i=\mu^Y_i(t,Y)dt+dW_i^{(n+1)},\quad 1\leq i\leq n,\quad {\mbox{\rm where}}
\end{equation}
$$
\mu_i^Y(t,Y)=V_i+\sum_{j=1}^{n}\Gamma_{ij}^{-1}\mu_j(t,e^{(\Gamma Y)_1},\ldots, e^{(\Gamma Y)_n}),
\quad
V_i=-\sum_{j=1}^n\Gamma_{ij}^{-1}\frac{\vert\gamma_j\vert^2}{2},
$$
for which the generator has a  Laplacian diffusion term, which leads to technically more convenient
expressions in the respective WKB expansion.

The  situation of time independent $\gamma$ (hence bounded $\mu^Y$) is exemplified in case study Section~\ref{num}, where 
the transition density $p^Y$ is approximated  and subsequently transformed to an approximated transition density $p^L$ of the Libor process.  Below we spell out the ingredients for computing the corresponding WKB coefficients 
 $c_0$ and $c_1$ according to (\ref{solc0ck}) to be exploited in Section~\ref{num}.  
Using the notations
$$
F_l(s,x,y):=\frac{1}{(\Gamma(x-y))_l}
\ln\frac{1+\delta_l e^{(\Gamma x)_l}}{1+\delta_l e^{(\Gamma y)_l}},\quad 1\leq l\leq n,
$$
and $a:=(\gamma_i^\top\gamma_j)_{i,j=1}^n,$ we may write,
\begin{equation}
c_0(s,x,y)=\sum_{i=1}^{n}V_i(y_i-x_i)
+\sum_{i=1}^{n}\sum_{j=1}^{n}\Gamma_{ij}^{-1}(y_i-x_i)\sum_{l=j+1}^{n}
a_{jl}F_l(s,x,y),
\end{equation}
\begin{eqnarray*}
\frac{\partial c_0}{\partial x_p}(s,x,y)&=&
-V_p+\sum_{j=1}^{n}\Gamma_{pj}^{-1}\sum_{l=j+1}^na_{jl}F_l(s,x,y)-\\
&&\sum_{i=1}^{n}\sum_{j=1}^{n}\Gamma_{ij}^{-1}(y_i-x_i)\sum_{l=j+1}^{n}a_{jl}
\frac{\partial F_l(s,x,y)}{\partial x_p},
\end{eqnarray*}
\begin{eqnarray*}
\frac{\partial^2 c_0}{\partial x_p^2}(s,x,y)&=&
2\sum_{j=1}^{n}\Gamma_{pj}^{-1}\sum_{l=j+1}^{n}a_{jl}\frac{\partial F_l(s,x,y)}{\partial x_p}-\\
&&\sum_{i=1}^{n}\sum_{j=1}^{n}\Gamma_{ij}^{-1}(y_i-x_i)\sum_{l=j+1}^{n}a_{jl}
\frac{\partial^2 F_l(s,x,y)}{\partial x_p^2},
\end{eqnarray*}
where
$$
\frac{\partial F_l(s,x,y)}{\partial x_p}=
\frac{\Gamma_{lp}(s)}{(\Gamma(x-y))_l}
\left(
\frac{\delta_l e^{(\Gamma x)_l}}{1+\delta_l e^{(\Gamma x)_l}}-F_l(s,x,y),
\right)\quad\mbox{and}
$$

\begin{eqnarray*}
\frac{\partial^2 F_l(s,x,y)}{\partial x_p^2}&=&
\frac{2\Gamma_{lp}^2}{(\Gamma(x-y))_l^2}
\left(
F_l(s,x,y)-\frac{\delta_l e^{(\Gamma x)_l}}{1+\delta_l e^{(\Gamma x)_l}}
\right)\\
&&+
\frac{\Gamma_{lp}^2}{(\Gamma(x-y))_l}
\frac{\delta_le^{(\Gamma x)_l}}{(1+\delta_le^{(\Gamma x)_l})^2}.
\end{eqnarray*}

We finally obtain $p^L(s,u,t,v)$ by density transformation formula,
\begin{equation*}
p^L(s,u,t,v)=p^Y(s,S_s^{-1}(u),t,S_t^{-1}(v))\left\vert\frac{\partial S_t^{-1}(v)}{\partial v}\right\vert
\end{equation*}
with
$$
S_t^{-1}(v):=\Gamma^{-1}(t)(\ln v_1,\dots,\ln v_n)^\top.
$$
For simplicity  in the case study below 
we assume that the matrix $\Gamma$ is upper triangular and does not depend on $t$.
We then  have, 
\begin{eqnarray*}
p^L(s,u,t,v)&=&\frac{1}{\sqrt{2\pi(t-s)}^{n}}\prod_{i=1}^n\frac{\Gamma_{ii}^{-1}}{v_i}
\exp\left(
-\frac{\left(\Gamma^{-1}(\ln \frac{v_1}{u_1},\dots,\ln \frac{v_n}{u_n})^\top\right)^2}{2(t-s)}\right.\\
&&
\left.+\sum\limits_{k=0}^{\infty}c_k(s,S^{-1}(u),S^{-1}(v))(t-s)^k
\right).
\end{eqnarray*}

\subsection{Case study}\label{num}

We now illustrate the estimators (\ref{E1}) and (\ref{nonexpl}) in Section \ref{Estimators} and 
the estimators (\ref{E11}) and (\ref{delprop2}) in Section \ref{Bermudan} by computing European and Bermudan
swaptions and Deltas in a Libor market model.
A (payer) swaption contract with maturity ${T}_i$ and strike
$\theta$ with principal $\$1$ gives the right to contract at
${T}_i$ for paying a fixed coupon $\theta$ and receiving floating
Libor at the settlement dates ${T}_{i+1}$,$\dots$,${ T}_n$. The discounted payoff of the contract is thus given by
\begin{equation}\label{swap-u}
f_i(L(T_i))=
\frac{1}{B_{n+1}(T_i)}\sum\limits_{j=i}^{n}B_{j+1}(T_i)\left(\delta_j
L_j(T_i)
-\theta\right)^+.
\end{equation}
For our experiments we take in (\ref{LMM}),  $\delta_i\equiv0.5,$ 
$L(0)$ $=$ $3.5\%$ flat, and constant volatility loadings,
$
\gamma_{i}(t)\equiv 0.2e_i,
$
where $e_{i}$ are $n$-dimensional unit vectors decomposing an
input correlation matrix $\rho$,
\begin{align}
\rho_{ij}=\exp\Big[\frac{|j-i|}{n-1}\ln\rho_{\infty}\Big], \quad
1\leq i,j\leq n,
\label{ExCor2aa}%
\end{align}
with 
$\rho_{\infty}=0.3$
(for more
general correlation structures we refer to \citet{SchoeBook}).
We consider at-the-mondey European swaptions with maturity $T_1$ and 
at-the-money Bermudan swaptions with 10 annual exercise possibilities, 
starting from $T_1$, hence $\theta=3.5\%$ in (\ref{swap-u}). 

For the Bermudan swaptions a good approximation of the optimal stopping policy is constructed by 
Andersen's method (strategy II, see \citet{An1}),
\begin{eqnarray}\label{tau-A}
\tau^{T_1,L(0)}_A:=\inf\left\{
i:\;T_i\geq T_1,\,T_i\in {\mathcal T},\,B_{n+1}(T_i)f_i(L_{T_i}^{T_1,L(0)})\geq H_i+ \right.&&\nonumber\\
\left.
\max_{j\geq i,T_j\in\ {\mathcal T}}E^i\left[B_{n+1}(T_j)f_j(L_{T_j}^{T_1,L(0)})\right]
\right\},&&
\end{eqnarray}
where
$L^{0,L(0);\,j}_{T_i}:=L_j(T_i)$
in line with Sections \ref{Basic} and  \ref{Bermudan}, and  ${\mathcal T}$ $:=$ $\{T_1,$ $T_3,$ $T_5,$ $\ldots,$ $T_{19}\}$ is the 
set of possible exercise dates.
The conditional expectations in (\ref{tau-A})
can be computed accurately in  closed-form (see, e.g., \citet{SchoeBook}).  
Further 
in (\ref{tau-A}), $H$ is a constant vector computed by backward optimization over a set of pre-simulated trajectories, as proposed by 
\citet{An1}. 
In Table~3, column~2, we display the Bermudan prices $\widehat{u}_{\mbox{\tiny ex}}^{\mbox{\tiny low}}$ due to stopping strategy $\tau_A.$
Upper estimations $\widehat{u}_{\mbox{\tiny ex}}^{\mbox{\tiny up}}$ are constructed from $\widehat{u}_{\mbox{\tiny ex}}^{\mbox{\tiny low}}$ by the
dual approach, developed in \citet{R} and \citet{HK}, see Table~3, column~1. 
As we see, the distance between lower and upper Bermudan estimates does not exceed 0.5\% (relative to the values).

The Libor transition kernel $p^L(s,x,t,y)$ shows to have  a pronounced "delta-shaped" form. 
%
Because of this,
it is very important for efficiency of the estimators in Sections \ref{Estimators}-\ref{Bermudan} to find a suitable proxy  density $\phi.$ 
We take for $\phi$ 
the transition kernel of a lognormal approximation $L^{\mbox{\tiny \rm lgn} }$, obtained from the Libor process (\ref{LMM}) by freezing the coefficients at the initial time $s$,
\begin{equation}\label{Lg}
L_t^{^{\mbox{\tiny \rm lgn}}s,x;\,i}(\xi)=x_i\exp(\xi_i),
\end{equation}
where $\xi$ is a $n$-dimensional Gaussian vector with
\begin{eqnarray}\label{num-xi}
&&E\xi_i=(t-s)\left(\frac{\vert\gamma_i\vert^2}{2}
-\sum_{j=i+1}^n\frac{\vert\gamma_i\vert\vert\gamma_j\vert\rho_{ij}\delta_j x_j}{1+\delta_j x_j}\right)=:\mu_i^{\mbox{\tiny \rm lgn}}(s,t,x),\nonumber\\
&&Cov(\xi_i,\xi_j)=\Gamma_{ij},\quad 1\leq i,j\leq n.
\end{eqnarray}
The transition density of $L^{\mbox{\tiny \rm lgn} }$ is then given by
\begin{eqnarray*}
\phi(s,u,t,v)
&:=&\frac{1}{\sqrt{2\pi(t-s)}^n}\prod_{i=1}^n\frac{\Gamma_{ii}^{-1}}{v_i}\times\\
&&\exp\left(-\frac{\left|\Gamma^{-1}((\ln \frac{v_1}{u_1}\dots\ln \frac{v_n}{u_n})-\mu^{\mbox{\tiny \rm lgn}}(s,t,x))^T\right|^2}{2(t-s)}\right),
\end{eqnarray*}
with $\mid \cdot \mid$ denoting the Euclidean norm.
So, in order to sample from  density $\phi$, we simulate 
via (\ref{Lg})-(\ref{num-xi}) the lognormal samples 
$$
_m\zeta=L_{T_1}^{^{\mbox{\tiny \rm lgn}} 0,L(0)}(_m\xi)=:g(0,L(0),T_1,_m\xi),\quad m=1,\ldots,M.
$$
As a (more accurately approximated)   Libor transition kernel, we use WKB approximation 
$p^L_0$ and $p^L_1$.
We endow the corresponding estimators with superscripts $_0$ and $_1$ respectively. 

European and Bermudan prices and Deltas via the estimators (\ref{E1}), (\ref{nonexpl}), (\ref{E11}), (\ref{delprop2})
are given in Tables 1-4. These results are compared with corresponding estimates due to "exact" Libor trajectories, simulated by a log-Euler scheme with small time step $\Delta t$ (we take
$\Delta t=\delta_i/5$ for Europeans, $\Delta t=\delta_i/10$ for Bermudans). The "exact" estimates are endowed with the superscript {\tiny ex}. For comparison, the  corresponding estimates due to the standard lognormal Libor approximation $L^{\mbox{\tiny \rm lgn}}$ are computed as well.
In order to keep standard deviations within 0.5\% relative (to the values) we take $h=3.5\times 10^{-5}$, $M$ $=$ $5\times 10^5.$
As we see,
the WKB approximation with only two coefficients, $c_0$ and $c_1$, 
provides a very close estimate of the European swaptions and Deltas, also for 
large maturities. The distance between the values simulated via "exact" Libor trajectories 
and the corresponding values due to the  WKB approximation
is smaller than 0.5\% relative to the value. In contrast, the lognormal estimators 
$\widehat{I}_{\mbox{\tiny \rm lgn}}$, $\frac{\partial \widehat{I}_{\mbox{\tiny \rm lgn}}}{\partial x_i}$,
$\widehat{u}_{\mbox{\tiny \rm lgn}}$ and $\frac{\partial \widehat{u}_{\mbox{\tiny \rm lgn}}}{\partial x_i}$
give  an acceptable approximation only for $T_1\leq 2$.\\
\ \\
{\bf Table 1.} European swaptions (values in basis points)\\
\begin{tabular}{| c || c | c | c | c |}
\hline
$T_1$ &
$\widehat{I}_{\mbox{\tiny ex}}$ (SD) &
$\widehat{I}_{\mbox{\tiny \rm lgn}}$ (SD) &
$\widehat{I}_{0}$ (SD) &
$\widehat{I}_{1}$ (SD) 
\\
\hline
\hline
1.0 &
178.9(0.4) &
179.0(0.4) &
181.6(0.4) &
178.9(0.4)
\\
2.0 &
245.3(0.6) &
246.5(0.6) &
251.4(0.6) &
244.3(0.6)
\\
5.0 &
351.3(1.0) &
359.8(1.0) &
376.4(1.1) &
352.7(1.0)
\\
10.0 &
429.6(1.5) &
451.4(1.6) &
495.6(1.7) &
431.8(1.4)
\\
\hline
\end{tabular}
\\
\ \\
\ \\
{\bf Table 2.} European Deltas (values in basis points)\\
\begin{tabular}{| c || c | c | c | c |}
\hline
$T_1$ &
$\widehat{\frac{\partial I_{\mbox{\tiny ex}}}{\partial x_n}}^{(h)}$ (SD) &
$\widehat{\frac{\partial I_{\mbox{\tiny \rm lgn}}}{\partial x_n}}^{(h)}$ (SD) &
$\widehat{\frac{\partial I_0}{\partial x_n}}^{(h)}$ (SD) &
$\widehat{\frac{\partial I_1}{\partial x_n}}^{(h)}$ (SD) 
\\
\hline
\hline
1.0 &
1768.3(2.8) &
1774.2(2.8) &
1794.9(2.9) &
1770.7(2.8)
\\
2.0 &
1726.4(2.9) &
1732.1(2.9) &
1729.0(2.9) &
1729.0(2.9)
\\
5.0 &
1599.6(3.2) &
1615.9(3.3) &
1722.5(3.5) &
1597.0(3.2)
\\
10.0 &
1417.1(3.8) &
1474.0(4.2) &
1668.0(4.7) &
1422.7(3.9)
\\
\hline
\end{tabular}
\\
\ \\
\ \\
{\bf Table 3.} Bermudan swaptions (values in basis points)\\
\begin{tabular}{| c || c | c || c | c | c |}
\hline
$T_1$ &
$\widehat{u}^{\mbox{\tiny low}}_{\mbox{\tiny ex}}$ (SD) &
$\widehat{u}^{\mbox{\tiny up}}_{\mbox{\tiny ex}}$ (SD) &
$\widehat{u}_{\mbox{\tiny \rm lgn}}$ (SD) &
$\widehat{u}_{0}$ (SD) &
$\widehat{u}_{1}$ (SD) 
\\
\hline
\hline
1.0 &
351.2(0.7) &
352.5(1.0) &
350.9(0.7) &
354.7(0.7) &
351.2(0.7)
\\
2.0 &
388.4(0.8) &
389.8(1.0) &
388.2(0.8) &
396.6(0.8) &
387.3(0.8)
\\
5.0 &
461.5(1.1) &
463.4(1.3) &
466.3(1.1) &
492.9(1.1) &
460.8(1.1)
\\
10.0 &
523.7(1.6) &
524.8(1.7) &
543.6(1.7) &
601.2(1.7) &
523.6(1.5)
\\
\hline
\end{tabular}
\\
\ \\
\ \\
{\bf Table 4.} Bermudan Deltas (values in basis points)\\
\begin{tabular}{| c || c | c | c | c |}
\hline
$T_1$ &
$\widehat{\frac{\partial u_{\mbox{\tiny ex}}}{\partial x_n}}^{(h)}$ (SD) &
$\widehat{\frac{\partial u_{\mbox{\tiny \rm lgn}}}{\partial x_n}}^{(h)}$ (SD) &
$\widehat{\frac{\partial u_0}{\partial x_n}}^{(h)}$ (SD) &
$\widehat{\frac{\partial u_1}{\partial x_n}}^{(h)}$ (SD) 
\\
\hline
\hline
1.0 &
2709.2(3.5) &
2720.9(3.5) &
2747.2(3.5) &
2709.2(3.5)
\\
2.0 &
2631.1(3.5) &
2630.5(3.5) &
2700.7(3.6) &
2628.6(3.5)
\\
5.0 &
2392.9(3.7) &
2407.7(3.8) &
2561.9(4.0) &
2398.0(3.8)
\\
10.0 &
2101.5(4.4) &
2152.5(4.7) &
2443.4(5.3) &
2111.5(4.4)
\\
\hline
\end{tabular}

\begin{remark}{\rm
The values in Tables~1--4 are computed using a second order Taylor approximation of $c_1(x,y)$ around $x,$ where  
$c_1(x,x)$, the derivatives $
\frac{\partial c_1}{\partial y_i}(x,x)
$  and
$
\frac{\partial^2 c_1}{\partial y_i\partial y_j}(x,x),
$ are computed (using finite differences) prior to the Monte Carlo simulation.
}
\end{remark}

\subsubsection*{Computational time}

\noindent By using the  new estimators we avoid step-by-step Euler simulation of the Libor process in the time interval $[0,T_1].$ Generally, the cost of Euler stepping 
up to  $T_1$ is proportional to $T_1/\Delta t,$ whereas the cost of the ''direct estimators'' (\ref{nonexpl}) and (\ref{delprop2}) is independent of $T_1.$ In particular, in the present Libor case, Euler stepping up to $T_1$ requires a cost proportional to $n^2\frac{T_1}{\Delta t}$ times  the cost of computing the (possibly virtual) pay-off at $T_1.$ In comparison, the cost of  simulating  estimators   (\ref{nonexpl}) and (\ref{delprop2}) is 
proportional to $n^2$ times  the cost of the pay-off at $T_1.$

In Figure 2 we compare for different $T_1$ the CPU time (per sample) needed for computing the values in Tables~2,4  using WKB based estimators (\ref{nonexpl}) and (\ref{delprop2}) with  the CPU time
required for computing the estimates via straightforward Euler stepping of "exact" Libor trajectories
up to $T_1.$  
We conclude that, particularly for larger $T_1,$  the efficiency gain is quite high in the European case, and  still considerable in   the Bermudan case.

\begin{remark}{\rm


\citet{FrKa} and  \citet{FJ} propose simulation schemes which 
improve upon Euler SDE simulation and  allow for taking larger time steps for obtaining the  same accuracy. 
  Assuming  that such a scheme requires a time step of order, say  $O(\sqrt \Delta t),$ instead of  $O(\Delta t)$ 
for the same accuracy, it is clear that, for example in the European case, the gain of our method  with respect to this one 
is still order of   $O(T/{\sqrt\Delta t}).$ 
}
\end{remark}

%
%
\begin{figure}
\vspace*{0cm} \caption{CPU time (seconds) for simulating European (left) and Bermudan (right) Deltas for different $T_1$
by log-Euler Libor simulation (solid line) and by  WKB density approximation (with $c_0$ and ${c}_1$) (dash line).}
\end{figure}


\end{document}